\newcommand{\cmark}{\ding{51}}%
\newcommand{\xmark}{\ding{55}}%
\newcommand{\rom}[1]{\uppercase\expandafter{\romannumeral #1\relax}}
\newcommand{\eat}[1]{}
\newcommand{\A}{\mathcal{A}}
\newcommand{\B}{\mathcal{B}}
\newcommand{\N}{\mathcal{N}}
\newcommand{\M}{\mathcal{M}}
\newcommand{\Z}{\mathbb{Z}}
\newtheorem{theorem}{Theorem}
\newtheorem{lemma}[theorem]{Lemma}
\newtheorem{proposition}[theorem]{Proposition}
\newtheorem{definition}[theorem]{Definition}
\title{Almost Envy-Free Allocations of Indivisible Goods or Chores with Entitlements}
\author{
    MohammadTaghi Hajiaghayi\textsuperscript{\rm 1}, Max Springer \textsuperscript{\rm 2}, Hadi Yami\textsuperscript{\rm 3}
}
\begin{document}

\maketitle

\begin{abstract}
We here address the problem of fairly allocating indivisible goods or chores to $n$ agents with \emph{weights} that define their entitlement to the set of indivisible resources. Stemming from well-studied fairness concepts such as envy-freeness up to one good (EF1) and envy-freeness up to any good (EFX) for agents with \textit{equal} entitlements, we present, in this study, the first set of impossibility results alongside algorithmic guarantees for fairness among agents with \textit{unequal} entitlements.

Within this paper, we expand the concept of envy-freeness up to any good or chore to the weighted context (WEFX and XWEF respectively), demonstrating that these allocations are not guaranteed to exist for two or three agents. Despite these negative results, we develop a WEFX procedure for two agents with \emph{integer} weights, and furthermore, we devise an approximate WEFX procedure for two agents with \emph{normalized} weights. We further present a polynomial-time algorithm that guarantees a weighted envy-free allocation up to one chore (1WEF) for any number of agents with additive cost functions. Our work underscores the heightened complexity of the weighted fair division problem when compared to its unweighted counterpart.
\end{abstract}

\section{Introduction}
The notion of a fair allocation of resources has been a fundamental problem in the field of economics and, more recently, computer science \cite{Bouveret_2016,Brams_1995,Peterson_2002,Peterson_2009,Pikhurko_2000,Procaccia_2009,Robertson_1998}. This idea of fairness takes on many forms in a wide array of application areas such as land division, divorce settlements, or natural resource distributions. 

One of the most well-established mathematical definitions of fairness is \emph{envy-freeness}. In an envy-free allocation each agent prefers her share more than any other: for the case of goods, she wants the most lucrative set of items whereas for chores, we seek to minimize the cost distributed across agents. However, envy-freeness is not necessarily obtainable when dealing with a set of indivisible goods or chores, and such an allocation is often difficult to compute. Hence, several relaxations of envy-freeness have been introduced.

%

Two such relaxations that we examine in the present work are \emph{envy-free up to one good} (EF1) and \emph{envy-free up to any good} (EFX). An allocation of indivisible items is EF1 if any possible envy of an agent for the share of another agent can be resolved by removing some good from the envied share. \cite{Lipton_2004} and \cite{Budish_2010} provide polynomial time algorithms for an EF1 allocation for any number of agents. An allocation is said to be EFX if no agent envies another agent after the removal of \emph{any} item from the other agent’s bundle. Theoretically, this notion is strictly stronger than EF1 and as a result, despite significant effort, the existence of EFX allocations is still unknown. \cite{Plaut_2018} described the first \emph{approximate} EFX results with later improvements by \cite{Amanatidis_2020_2} and \cite{Farhadi_2021}. With respect to chore division, the literature is more scarce: the first discrete and bounded envy-free protocol for any number of agents was not proposed until 2018 \cite{dehghani2018envy}.

In both the good and chore division problems, the literature noted above is restricted to the case where each agent's perspective is weighted equally. However, more recently, the generalization has been proposed where agents have a valuation (or cost in the case of chores) associated with each item as well as an ``entitlement" or weighting \cite{Aziz_2019_2,Babaioff_2021,Chakraborty_2020,Farhadi_2017}. This problem setting captures significantly more characteristics of real-world fair division issues than that of the equal entitlement case. 

To further emphasize this natural problem, consider the recent example that arose during the height of the COVID-19 pandemic. When a vaccine was in production, producers were faced with the question of how to fairly distribute doses in a manner that respects all parties while also mitigating further spread of the virus. In this context, it is not enough to just give each nation an equal share of the supply -- we must now take into account population densities and each nation's medical infrastructure among a plethora of other factors. This naturally gives rise to agent weightings that, as we will show in the paper, result in a considerable complication of the fair allocation problem.


\begin{table*}[t]
\centering
\begin{tabular}{c|cccccc}
 &
  \textbf{1WEF} &
  \textbf{WEFX (id.)} &
  \textbf{WEFX (int.)} &
  \textbf{WEFX (add.)} &
  \textbf{XWEF} &
  \textbf{$\alpha$-WEFX} \\ \hline
\textbf{$n=2$} &
  \cmark (Thm. \ref{thm:1wef}) &
  \cmark (Thm. \ref{thm:ord_wefx}) &
  \cmark (Thm. \ref{thm:int_wefx}) &
  \xmark (Thm. \ref{thm:wefx_n2}) &
  \xmark (Thm. \ref{thm:xwef_n2}) &
  \cmark (Thm. \ref{thm:alg}) \\ \hline
\textbf{$n=3$} &
  \cmark (Thm. \ref{thm:1wef}) &
  \cmark (Thm. \ref{thm:ord_wefx}) &
  $\mathord{?}$ &
  \xmark (Thm. \ref{thm:wefx_n3}) &
  \xmark (Thm. \ref{thm:xwef_n3}) &
  $\mathord{?}$ \\ \hline
\textbf{$n>3$} &
  \cmark (Thm. \ref{thm:1wef}) &
  \cmark (Thm. \ref{thm:ord_wefx}) &
  $\mathord{?}$ &
  $\mathord{?}$ &
  $\mathord{?}$ &
  $\mathord{?}$
\end{tabular}
\caption{A summary of our existence results. ``\cmark'' indicates the type of allocation specified by the column is guaranteed to exist in the setting specified by the row, while ``\xmark'' indicates that we give a counterexample and ``$\mathord{?}$'' indicates an open question. ``id.'', ``int.'', and ``add.'' serve as shorthand for identical valuation, integer valued weights, and additive valuation assumptions respectively.}
\label{tab:1}
\end{table*}

\subsection{Our Contributions}
Although the existence of EFX allocations is a major open problem in the field, we demonstrate in Section~\ref{sec:imposs} that WEFX, the generalized version of EFX for agents with entitlements, cannot be guaranteed in general with counterexamples that are at best 0.786-WEFX for $n=2$ agents (or 0.795 for $n=3$), giving a novel upper bound on the problem. We further provide analogous impossibility results for the chore setting on $n=2$ or 3 agents in Section~\ref{sec:xwef}. 

Nevertheless, we showcase that, for the case of two agents, a variant of the ``I-cut-you-choose" procedure ensures the WEFX property when dealing with integer-valued weights. Additionally, we introduce a novel procedure for an approximate WEFX solution under normalized weights in Section~\ref{sec:alg}, employing an algorithm akin to the extensively studied ``moving-knife" technique. This approach attains an approximate factor of $\frac{w}{2\sqrt[3]{m}}$, where $w$ represents the highest agent weighting and $m$ is the number of items. Synthesis of the above results demonstrates that the weighted version of the envy-free up to any good (chore) problem is \emph{considerably} more challenging than its unweighted counterpart, even in the most simplistic case of $n=2$. These results also nicely address an open direction posed in \cite{Chakraborty_2020} on this notion of weighted EFX allocations.

Finally, in Section~\ref{sec:1wef}, we extend the work of \cite{Chakraborty_2020} to give a weighted-picking sequence procedure for allocating indivisible chores to weighted agents that is \textit{envy-free up to one chore (1WEF)} and runs in polynomial time. This positive result further reaffirms the gap in complexity between our two relaxed notions of envy-freeness. We here note that independent and in parallel to our work, \cite{wu2023weighted} also proved such allocations exist and can be computed efficiently using a similar procedure. A summary of our existence and impossibility results is presented in Table~\ref{tab:1}.

\subsection{Further Related Work}
An extensive line of work exists concerning fair division of indivisible items, and for a complete overview we defer the reader to the surveys of \cite{chevaleyre2017distributed} and \cite{markakis2017approximation}. We here focus only a subset of papers that are most relevant to the present work.

Prior works pertaining specifically to the fair allocation of indivisible items to \emph{asymmetric} agents have been predominantly focused on fairness notions not based on envy. \cite{Farhadi_2017} introduced the weighted extension of the \emph{maximin share} (MMS) as studied by \cite{Barman_2020,Budish_2010,Feige_2021,Garg_2019}. Moreover, \cite{Aziz_2019_2} explored the weighted MMS (WMMS) concept for fair division of chores (negatively valued goods) where an agent's weight is intuitively their share of the overall workload to be completed. \cite{babaioff2019fair} examine the competitive equilibrium of agents with asymmetric budgets. More recently, they worked to redefine and further investigate the MMS allocation for the case of agents with arbitrary entitlements \cite{Babaioff_2021}. The authors first note that a WMMS allocation, as presented in \cite{Farhadi_2017}, does not align with the intuitive sense of ensuring that highly entitled agents receive a stronger preferential treatment in the allocation process. As such, they define the \textit{AnyPrice Share} (APS) allocation for both the unweighted and weighted contexts and provide a $\frac{3}{5}$ approximate solution.

The recent work of \cite{aziz2020polynomial} exhibited a polynomial-time algorithm for the computation of an allocation that satisfies both Pareto optimality and \emph{weighted proportionality up to one item} (WPROP1) of goods and chores for agents with asymmetric weights. A large body of work examines unequal agent weightings for divisible items through the notion of \emph{proportionality} \cite{barbanel1996game,Brams_1995,cseh2020complexity,segal2020cut}. While in the unweighted setting, EF1 allocations are also PROP1, the weighted setting does not admit such a luxury \cite{Chakraborty_2020}. 

Most closely related to our paper is the recent work by Chakraborty et al. \cite{Chakraborty_2020} which considers the allocation problem for indivisible goods with the generalized notion of weighted agents, providing a polynomial time WEF1 alogirthm. We build upon this result and the algorithm for the case of chores, while also providing the first results for both WEFX and XWEF with various existential results for each -- answering the question of whether these generalized notions of fair allocations can be computed. 

\section{Preliminaries and Basic Definitions}
\paragraph{Fair Allocation Problem.} An instance of a \emph{weighted} fair allocation problem consists of a set $\N$ of $n$ agents where each $i \in \N$ has weight $w_i > 0$. Let $\M$ be a set of $m$ goods (or chores) with, potentially heterogenous, valuation functions $v_i : 2^m \rightarrow \mathbb{R}_{\ge 0}$ (or cost functions $c_i : 2^m \rightarrow \mathbb{R}_{\ge 0}$) that are assumed to be additive unless otherwise noted. An allocation of $\M$ is a partition of the set into $n$ disjoint ``bundles'', $\A = (\A_1, ..., \A_n)$, such that $\bigcup_{i \in [n]} \A_i = \M$ and $\A_i \cap \A_j = \emptyset$ for any two $i,j \in [n]$. For readability, we will henceforth let $\A = (\A_1, ..., \A_n)$ denote an allocation of goods and $\B = (\B_1,...,\B_n)$ chores. 


\paragraph{Fairness Criteria.} We seek to find an allocation over the set of goods (or chores) that is \emph{envy-free}: given an instance of a fair division problem and an allocation $\A$, an agent $i$ envies agent $j$ if they strictly prefer the set $\A_j$ over their own bundle $\A_i$ up to a scaling by their weight.
\begin{definition}
An allocation is \textit{weighted envy-free} (WEF) if no agent envies another, i.e. for any pair $i,j \in \N$ of agents we have
\begin{align*}
	\frac{v_i(\A_i)}{w_i} \geq \frac{v_i(\A_j)}{w_j} \text{ for goods, } \frac{c_i(\B_i)}{w_i} \leq \frac{c_i(\B_j)}{w_j} \text{ for chores.}
\end{align*}
\end{definition}
However, envy-freeness is too strong a criteria to meet for both indivisible goods and chores. As such, we define two relaxations of this notion, namely \textit{weighted envy-freeness up to one good or chore} (WEF1 and 1WEF respectively) and \textit{weighted envy-freeness up to any good or chore} (WEFX and XWEF respectively).
\begin{definition}
An allocation of goods $\A$ (or chores $\B$) is said to be (\textit{i}) weighted envy-free up to one good (WEF1) if for any pair of agents $i,j \in \N$ if
\begin{align*}
	\frac{v_i(\A_i)}{w_i} \geq \text{min}_{a \in \A_j}\frac{v_i(\A_j \setminus \{a\})}{w_j}
\end{align*}
(\textit{ii}) weighted envy-free up to one chore (1WEF) if for any pair of agents $i,j \in \N$ if
\begin{align*}
	 \text{min}_{b \in \B_i}\frac{c_i(\B_i \setminus \{b\})}{w_i} \leq \frac{c_i(\B_j)}{w_j}
\end{align*}
\end{definition}
\begin{definition}
An allocation of goods $\A$ (or chores $\B$) is said to be (\textit{i}) weighted envy-free up to any good (WEFX) if for any pair of agents $i,j \in \N$ if
\begin{align*}
	\frac{v_i(\A_i)}{w_i} \geq \max_{a \in \A_j}\frac{v_i(\A_j \setminus \{a\})}{w_j}
\end{align*}
(\textit{ii}) weighted envy-free up to any chore (XWEF) if for any pair of agents $i,j \in \N$ if
\begin{align*}
	 \max_{b \in \B_i}\frac{c_i(\B_i \setminus \{b\})}{w_i} \leq \frac{c_i(\B_j)}{w_j}
\end{align*}
\end{definition}
While the envy-freeness up to one and any item are very closely related, they define relaxed notions of fairness that have a large discrepancy in terms of complexity. The current literature has demonstrated that a WEF1 allocation always exists for agents with additive valuation functions \cite{Chakraborty_2020}, a result that we here extend to the chore division setting with a polynomial time algorithm. However, the problems of the existence of WEFX allocations for goods as well as XWEF for chores remain open.
In this paper, we show that WEFX allocations are guaranteed to exist under certain, restrictive, problem assumptions. However, in general, we show that WEFX and XWEF allocations are not guaranteed to exist for agents with additive cost functions, but overcome this inherent barrier with a polynomial time algorithm that yields an \emph{approximate} WEFX allocation for two agents.
\begin{definition}
For constants $\alpha \leq 1$ and $\beta \geq 1$, an allocation of goods $\A$ (or chores $\B$) is called (\textit{i}) $\alpha$-approximate envy-free up to any good ($\alpha$-WEFX) if for any $i,j \in \N$
\begin{align*}
	\frac{v_i(\A_i)}{w_i} \geq \alpha \cdot \max_{a \in \A_j}\frac{v_i(\A_j \setminus \{a\})}{w_j}
\end{align*}
(\textit{ii}) $\beta$-approximate envy-free up to any chore ($\beta$-XWEF) if for any $i,j \in \N$
\begin{align*}
	\max_{b \in \B_i}\frac{c_i(\B_i \setminus \{b\})}{w_i} \leq \beta \cdot \frac{c_i(\B_j)}{w_j}
\end{align*}
\end{definition}


\section{Weighted Division of Indivisible Goods}
We here investigate the issue of fairly distributing a collection of indivisible goods among weighted agents, considering different problem assumptions. While achieving envy-freeness through a complete allocation is the ideal outcome, it is not always easily computed or assured by an algorithm. Consequently, we often turn to the less strict criteria of envy-freeness up to any item. The overall existence of such criteria remains an unresolved challenge in the realm of fair allocation for both unweighted and weighted systems.

In this context, we present certain assumptions that prove adequate for ensuring the existence of WEFX allocations. Moreover, when these assumptions are relaxed, we exhibit the non-existence of such allocations.

\subsection{Identical Valuations}
To begin, we simplify the scenario by initially considering a situation where the valuation functions are the same across all agents. In this setting, we observe that the classic \emph{envy-cycle elimination} method by Lipton produces the desired WEFX allocation with only a minor adjustment. Specifically, the ``envy-graph'' is modified to have an edge from agent $i$ to agent $j$ if and only if $i$ has \emph{weighted} envy towards $j$. This gives us the following positive results.

\begin{theorem} \label{thm:add_wefx}
    The weighted version of \cite{Lipton_2004}'s envy-cycle elimination algorithm produces a complete WEFX allocation for agents with (additive) identical valuation functions.
\end{theorem}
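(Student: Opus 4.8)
The plan is to adapt Lipton et al.'s envy-cycle elimination procedure to the weighted setting and argue the WEFX invariant is maintained throughout. First I would set up the algorithm: goods are introduced one at a time, and before each insertion we maintain a partial allocation that is WEFX. Define the \emph{weighted envy graph} on the $n$ agents with a directed edge $i \to j$ whenever agent $i$ weighted-envies agent $j$ in the current partial allocation, i.e. $v(\A_i)/w_i < v(\A_j)/w_j$ (valuations are identical, so we drop the subscript). As long as this graph has a directed cycle, resolve it by cyclically shifting bundles along the cycle — each agent on the cycle hands its bundle to its predecessor — which strictly decreases the number of envied-from edges (or at least the total envy in the standard potential-function sense) and, crucially, only improves each participating agent's own situation, so the WEFX property is preserved by the shift. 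After finitely many shifts the graph is acyclic and hence has a source: an agent $s$ whom nobody envies. Assign the next good to $s$.

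The key step is verifying that assigning the new good $g$ to a non-envied agent $s$ keeps the allocation WEFX. For any other agent $j$: before the assignment $v(\A_j)/w_j \le v(\A_s)/w_s$ since no one envied $s$; after the assignment $s$'s bundle only grows, so we must check $v(\A_j)/w_j \ge \max_{a \in \A_s \cup \{g\}} v((\A_s \cup \{g\}) \setminus \{a\})/w_s$. Removing the good $a^\star$ that achieves the max from $\A_s \cup \{g\}$ leaves a bundle of value at most $v(\A_s)$ (because $v(\A_s \cup \{g\} \setminus \{a^\star\}) \le v(\A_s \cup \{g\}) - v(g) \le v(\A_s)$ when $a^\star$ is a maximum-value element — here I would spell out that the maximizing $a^\star$ has value at least $v(g)$, or is $g$ itself, so dropping it brings the bundle value down to at most $v(\A_s)$). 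Hence $\max_{a} v((\A_s \cup\{g\})\setminus\{a\})/w_s \le v(\A_s)/w_s \le v(\A_j)/w_j \cdot (w_j/w_s) \cdot (w_s/w_j)$ — more carefully, $\le v(\A_s)/w_s$ and we already know $v(\A_j)/w_j \ge v(\A_s)/w_s$ is false in general; rather we know the reverse, $v(\A_s)/w_s \ge v(\A_j)/w_j$. So I need the comparison in the direction "$j$ does not WEFX-envy $s$", which reads $v(\A_j)/w_j \ge \max_a v((\A_s\cup\{g\})\setminus\{a\})/w_s$, and this follows precisely because the right-hand side is at most $v(\A_s)/w_s$ only helps if that is $\le v(\A_j)/w_j$ — which it is not. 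The correct argument: since $s$ was a source, \emph{$s$ envies no one is not what we need}; we need \emph{no one envies $s$}, giving $v(\A_j)/w_j \ge v(\A_s)/w_s$ for all $j$ is false — a source has no incoming edges, meaning no $j$ envies $s$, i.e. $v(\A_s)/w_s \ge v(\A_j)/w_j$; that is the wrong inequality, so in fact I should pick $s$ to be a \textbf{sink}-free reasoning or re-examine: we want to give $g$ to someone so that others don't envy them, so $s$ should be a \emph{source} (no incoming edges $\Rightarrow$ no one envies $s$), but then $v(\A_s)/w_s \ge v(\A_j)/w_j$, and after adding $g$ the envy can only worsen. The resolution, exactly as in the unweighted proof, is that WEFX tolerates removal of one good: after adding $g$ to $s$, for agent $j$ we have $v(\A_j)/w_j \ge v(\A_{s,\text{old}})/w_s \ge v((\A_s\cup\{g\})\setminus\{a^\star\})/w_s$ where $a^\star$ is the max-value item in the new bundle, using $v((\A_s\cup\{g\})\setminus\{a^\star\}) \le v(\A_{s,\text{old}})$. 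This is the crux and the main obstacle — getting the removed-item bookkeeping right so the weight factors $w_s$ cancel cleanly on both sides.

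Finally I would handle termination and the edge cases: the potential-function argument that cycle-elimination halts (the multiset of bundle values, or the count of envied agents, strictly improves), the base case of the empty allocation (vacuously WEFX), and the observation that edges $j \to s$ for $j \neq s$ with $s$ the recipient are the only ones that can be newly created, all of which are absorbed by a single good removal. I expect the bulk of the writeup to be this invariant maintenance; the algorithmic description and termination are routine adaptations of the classical argument, the only genuinely new ingredient being that the envy comparison and the good-removal slack are both scaled by the same $1/w_s$, so the unweighted proof goes through essentially verbatim once the graph is redefined with weighted envy.
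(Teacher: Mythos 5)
There is a genuine gap at the crux of your invariant-maintenance step. WEFX requires that $j$ not envy $s$'s new bundle after removing \emph{any} good; with additive valuations the binding removal is of the \emph{least} valuable item, i.e. $\max_{a}v((\A_s\cup\{g\})\setminus\{a\}) = v(\A_s\cup\{g\})-\min_{a\in \A_s\cup\{g\}}v(a)$. Your bound instead removes a \emph{maximum}-value element $a^\star$ (justified by ``$a^\star$ has value at least $v(g)$ because it is a maximum-value element''), which only yields a WEF1-type guarantee, not WEFX. What is missing is the observation the paper's proof hinges on: since the unenvied agent is always handed its favourite remaining good and valuations are identical, goods enter bundles in nonincreasing order of value, so every good already in $\A_s$ has value at least $v(g)$ and hence $g$ is a least-valuable element of the new bundle. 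Only with that fact does $\max_a v((\A_s\cup\{g\})\setminus\{a\}) \le v(\A_s)$ hold, which combined with ``no one weighted-envies $s$'' (i.e. $v(\A_j)/w_j \ge v(\A_s)/w_s$ for all $j$) gives WEFX. Your sketch never states or uses this ordering property, so as written the step establishes only envy-freeness up to the most valuable good.

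Separately, you retain the cycle-elimination step and assert that a cyclic bundle shift preserves WEFX. In the weighted setting this is not justified: bundles move to holders with different weights, so an off-cycle agent who tolerated a bundle under a heavy agent may WEFX-envy it once it sits with a lighter agent (indeed, Proposition~\ref{prop:chakra} shows weighted envy-cycle elimination can fail even WEF1 for general additive valuations). The paper sidesteps this entirely by proving that under identical valuations the weighted envy graph can contain no cycle at all, since a cycle would force $v(\A_1)/w_1 < v(\A_2)/w_2 < \cdots < v(\A_\ell)/w_\ell < v(\A_1)/w_1$, a contradiction; hence an unenvied agent always exists and decycling is never invoked. You should either prove acyclicity as the paper does (making your shift lemma unnecessary) or supply a proof that shifts preserve WEFX, which in this weighted setting you cannot do in general. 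The repeated reversals on the direction of the source inequality should also be cleaned up, though you do land on the correct one.
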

\eat{
\begin{proof}
    Let $v$ denote the valuation function for all agents. By construction of the algorithm \cite{Lipton_2004}, the (incomplete) allocation at the end of each iteration is guaranteed to be WEFX as long as we can find an agent, say $i$, towards whom no other agent has weighted envy at the beginning of the iteration. This agent will pick their favorite among the remaining items which will then be the least valuable item in $i$'s bundle according to all agents due to the identical valuations (and the greedy selection by $i$). At this step in the allocation procedure, we give the item under consideration to agent $i$ and thus any resulting weighted envy towards $i$ can be eliminated by removing this item. If there is no unenvied agent, then the weighted envy graph consists of at least one cycle; however, under identical valuations, the envy graph cannot have cycles. To see this, suppose that agents $1,2,...,\ell$ form a cycle (in that order) for some $\ell \in [n]$. Since agents have identical valuations, it must be that
    
    \begin{align*}
        \frac{v(\A_1)}{w_1} < \frac{v(\A_2)}{w_2} < ... < \frac{v(\A_\ell)}{w_\ell} < \frac{v(\A_1)}{w_1},
    \end{align*}
    which yields a contradiction.
\end{proof}
}

Moreover, we can slightly generalize this result to encompass the setting in which agents share a preferential ordering over the set of goods.
\begin{theorem} \label{thm:ord_wefx}
    The weighted version of the envy-cycle elimination algorithm produces a complete WEFX allocation for agents with identical \underline{ordinal} preferences.
\end{theorem}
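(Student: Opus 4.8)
The plan is to run Lipton's envy-cycle elimination using the \emph{weighted} envy graph, presenting the goods in non-increasing order of the shared ranking, and to prove by induction on the number of goods handed out that the running partial allocation is always WEFX; the theorem is the case in which all $m$ goods have been assigned. Relabel the goods so that every agent ranks them $g_1 \succeq g_2 \succeq \cdots \succeq g_m$. At step $t$, while the current partial allocation has no \emph{source} in the weighted envy graph (equivalently, no unenvied agent), rotate bundles along a weighted-envy cycle; once a source $j$ exists, give $g_t$ to $j$. The reason for this ordering is that when $g_t$ is added to a bundle $\A_j$ it becomes the good that \emph{every} agent ranks last inside $\A_j \cup \{g_t\}$, since $\A_j \subseteq \{g_1, \dots, g_{t-1}\}$ and the valuations are additive and consistent with the common ranking.

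First I would dispatch the assignment step. Suppose the partial allocation $\A$ is WEFX and $j$ is a source, and let $\A'$ agree with $\A$ except that $\A'_j = \A_j \cup \{g_t\}$. Pairs of agents not involving $j$ are unchanged; for any pair $(j,k)$, enlarging $\A_j$ only strengthens the left-hand side of $j$'s WEFX inequality toward $k$. The only case with content is a pair $(i,j)$ with $i \ne j$: because $g_t$ is the good all agents rank last in $\A'_j$, the maximizer in $\max_{a \in \A'_j} v_i(\A'_j \setminus \{a\})$ is $a = g_t$, with value $v_i(\A_j)$, so the WEFX requirement $\frac{v_i(\A_i)}{w_i} \ge \max_{a \in \A'_j} \frac{v_i(\A'_j \setminus \{a\})}{w_j}$ collapses to $\frac{v_i(\A_i)}{w_i} \ge \frac{v_i(\A_j)}{w_j}$, which says ``$i$ did not weighted-envy $j$'s bundle before $g_t$ was added'' --- true because $j$ was a source. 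Hence the assignment step preserves WEFX.

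The hard part, and the point where the weighted setting genuinely bites, is the cycle-elimination step: unlike in the unweighted algorithm, rotating bundles along a weighted-envy cycle $k_1 \to k_2 \to \cdots \to k_\ell \to k_1$ (so $k_s$ receives the bundle $\A_{k_{s+1}}$ it envies) need not stay WEFX for free, because the ``up to any good'' slack for an on-cycle agent $k_r$ toward some agent $q$ can end up divided by a \emph{smaller} weight than before --- an issue that cannot arise when all weights are equal. I would attack this again via the common-last-good property (so the slack in each WEFX inequality is the value all agents place on one designated good of the other's bundle), chained with the weighted-envy inequalities that run around the cycle, reinforced if necessary by a structural invariant that holds for every allocation the algorithm can actually reach --- something strictly stronger than plain WEFX on $\{g_1, \dots, g_t\}$. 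One must also bound the number of rotations between consecutive assignments, for which the unweighted potential $\sum_i v_i(\A_i)$ need not increase under a weighted rotation and a weighted substitute such as $\sum_i v_i(\A_i)/w_i$ is needed. With both steps shown WEFX-preserving and the rotation loop shown to halt, the induction closes and the terminal, complete allocation is WEFX; the identical-valuation setting (Theorem~\ref{thm:add_wefx}) is recovered as the special case in which the weighted envy graph is always acyclic, so that no rotations ever take place.
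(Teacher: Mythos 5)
Your assignment step is exactly the paper's argument: because goods are handed out in the common ranking order, the newly placed good is the least valuable item of the augmented bundle for \emph{every} agent, so the binding removal in the WEFX condition is that good itself and the condition collapses to the source being weighted-unenvied. The genuine gap is the cycle-elimination step, which you correctly flag as the place where the weighted setting bites but then leave unproven: you promise ``a structural invariant \dots strictly stronger than plain WEFX'' and a chain of inequalities around the cycle, yet no such invariant is stated, let alone shown to be preserved, so your induction never closes. Your termination fix is also not sound as written: the potential $\sum_i v_i(\A_i)/w_i$ need not increase under a weighted rotation, since when $w_{k_s} > w_{k_{s+1}}$ the weighted-envy relation $v_{k_s}(\A_{k_s})/w_{k_s} < v_{k_s}(\A_{k_{s+1}})/w_{k_{s+1}}$ is compatible with $v_{k_s}(\A_{k_{s+1}}) < v_{k_s}(\A_{k_s})$ (e.g. weights $2$ and $1$ with bundle values $1.5$ and $1$), in which case agent $k_s$'s term strictly drops.

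The paper resolves the no-source case quite differently and more cheaply: it considers the agent $i'$ of \emph{minimal weight} on the cycle. Since $w_{i'} \le w_{i'+1}$, weighted envy toward the successor implies plain envy, $v_{i'}(\A_{i'}) < \frac{w_{i'}}{w_{i'+1}} v_{i'}(\A_{i'+1}) < v_{i'}(\A_{i'+1})$, so the rotation strictly improves $i'$ in raw value; repeating this, the minimal-weight agent is eventually removed from the cycle, a source must emerge, and the assignment-step argument takes over. No invariant stronger than WEFX is introduced, and no weighted potential is needed. To complete a proof along your lines you must either actually produce and verify the invariant you allude to (together with a valid termination measure for the rotation loop), or adopt the paper's minimal-weight-agent de-cycling argument; as it stands, the only part of the analysis that genuinely differs from the unweighted EFX case is the part that is missing.
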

\eat{
\begin{proof}
    We now consider the slightly relaxed model of identical ordinal preferences: there exists some ordering of the set, denoted $\sigma \in S_n$, such that $v_i(\sigma_1) \ge v_i(\sigma_2) \ge ... \ge v(\sigma_m)$ for all agents $i \in [n]$.

    Again by the envy-cycle elimination procedure, we always allocate to a source node in the envy-graph who naturally picks their favorite item among the remaining options. Since all agents have an identical ordering of the items, this is the favorite item for all agents and is moreover the least valuable within agent $i$'s bundle. Thus, any new envy incurred can be alleviated by removal of this item which implies WEFX as above.

    If there is no unenvied agent (source node in the graph), then there must be a cycle in the envy-graph. Thus, suppose that agents $1,2,...,\ell$ form a cycle (in that order) for some $\ell \in [n]$, i.e.
    
    \begin{align*}
        \frac{v_i(\A_i)}{w_i} < \frac{v_i(\A_{i+1})}{w_{i+1}} \text{ for $i$ under modulo $\ell$.}
    \end{align*}
    Let $i'$ be the agent of minimal weight in this cycle. By nature of the envy cycle, we must then have that
    
    \begin{align*}
        &\frac{v_{i'}(\A_{i'})}{w_{i'}} < \frac{v_{i'}(\A_{i'+1})}{w_{i'+1}} \\
        \Rightarrow &v_{i'}(\A_{i'}) < \frac{w_{i'}}{w_{i'+1}} v_{i'}(\A_{i'+1}) < v_{i'}(\A_{i'+1})
    \end{align*}
    and therefore the decycling procedure will strictly improve the allocation according to agent $i'$. We can thus repeat the process until the minimal weight agent is removed from the cycle. Since there always exists an agent of minimal weight, we can always remove an agent from the cycle until a source node exists in the envy-graph. This concludes the proof of WEFX.
\end{proof}
}
The proofs of both of these theorems are deferred to the appendix due to space constraints.

The simple model of identical valuations however affords results that are considerably more positive than the general case as exhibited by the below result.
\begin{proposition}[\cite{Chakraborty_2020}] \label{prop:chakra}
    The weighted version of the envy-cycle elimination algorithm may not produce a complete WEF1 allocation, even in a problem instance with two agents and additive valuations.
\end{proposition}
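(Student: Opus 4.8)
The plan is to produce an explicit instance on two agents with \emph{unequal} weights and a small additive‑goods profile, and then to trace the algorithm to an output that violates WEF1. The starting point is the observation that the only step of envy‑cycle elimination that can break a fairness guarantee is the decycling step. Indeed, if the current source $s$ (an unenvied agent) receives any remaining good $g$, the updated partial allocation is still WEF1: for a pair $(j,s)$ with $j\neq s$ one deletes the just‑added $g$ from $\A_s$ and applies the source inequality $v_j(\A_j)/w_j\ge v_j(\A_s)/w_s$; for pairs $(s,k)$ only $s$'s own bundle has grown; and all remaining pairs are untouched. Hence any counterexample must pass through an iteration in which the weighted envy‑graph has \emph{no} source, i.e.\ contains a cycle. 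With two agents this means agent $1$ weighted‑envies agent $2$ and agent $2$ weighted‑envies agent $1$ simultaneously --- impossible under equal weights but perfectly possible once $w_1\neq w_2$ --- and the algorithm then resolves it by swapping the two bundles. In the unweighted world this swap is harmless: each participant's value strictly increases and, the multiset of bundles being unchanged, the EF1 relation toward every old bundle carries over to every new bundle. With weights this is false: a swap can leave the heavier agent holding a bundle it values \emph{less} than before (exactly the situation in which it was envying the lighter agent's smaller bundle), so the post‑swap allocation can display weighted envy exceeding a single good.

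Accordingly, I would engineer the valuations and the (legitimate) tie‑breaking so that the execution reaches, after a few greedy picks, a configuration of the form $(X,\{g'\})$ in which agent $1$ (the heavier one) holds a bundle $X$ of at least two goods that it genuinely wanted --- keeping the deterministic part of the run on track --- while the good $g'$, having been taken by agent $2$ before agent $1$ could reach it, satisfies $v_1(X)/w_1<v_1(g')/w_2$ and $v_2(g')/w_2<v_2(X)/w_1$, so that a cycle appears. The algorithm must then swap, handing agent $1$ the near‑worthless singleton $\{g'\}$ and agent $2$ the bundle $X$. A short computation shows that after the swap the reverse envy edge is ruled out by $w_1>w_2$, so agent $1$ is the unique source and the remaining good(s) are fed to it; choosing these to be essentially valueless to agent $1$ ensures that no later pick undoes the damage and that no new cycle forms. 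The final allocation then has agent $1$ holding a bundle worth little to it while agent $2$ holds a two‑good bundle from which deleting even the single best good still leaves (weighted) value exceeding agent $1$'s entire share --- a WEF1 violation for agent $1$. A concrete instantiation uses, say, $w_1=3,\ w_2=2$ and four goods with suitably chosen additive values (three goods already suffice if one reads the algorithm as resolving any leftover cycle before returning).

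The only genuine work, and the main obstacle, is verifying that such numbers exist: one must simultaneously meet (i) the greedy‑preference inequalities that steer the deterministic portion of the run into the intended pre‑cycle configuration, (ii) the two strict inequalities that make that configuration a true weighted‑envy cycle, (iii) the post‑swap inequality certifying envy by more than one good, and (iv) the requirement that the leftover goods, once absorbed by the still‑envious source, neither repair the violation nor create a second cycle. Each of these is linear in the valuation parameters, so the task reduces to exhibiting a point in the resulting (nonempty) polytope; the strict inequality $w_1\neq w_2$ is precisely what makes families (ii) and (iii) jointly satisfiable, recovering as a by‑product the classical fact that the procedure is EF1‑preserving in the unweighted case. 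It then only remains to run the algorithm step by step on the chosen instance and confirm that the output fails the WEF1 test for the disadvantaged agent.
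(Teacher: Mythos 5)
The proposition you are proving is not proved in this paper at all: it is imported from \cite{Chakraborty_2020}, where it is established exactly the way you outline, by exhibiting a concrete two-agent instance and tracing the weighted envy-cycle elimination run to a non-WEF1 output. Your structural analysis is sound (a source-pick preserves WEF1 by deleting the just-added good; hence any failure must route through a decycling step; and with unequal weights a swap along a weighted envy cycle need not raise an agent's value, which is what kills the unweighted EF1-preservation argument). But for a ``may not produce'' statement the counterexample \emph{is} the proof, and you never produce it. Declaring that the system of greedy-choice, cycle, and post-swap inequalities defines a ``nonempty polytope'' is precisely the content of the proposition; asserting nonemptiness without a witness leaves the proof with its decisive step missing. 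You also defer the step-by-step trace (``it then only remains to run the algorithm''), which is the other half of the verification, since the inequalities you need depend on which execution path the algorithm actually follows.

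A second, smaller issue is an internal tension you would have had to resolve when choosing numbers: the pre-swap cycle edge forces $v_1(g') > (w_2/w_1)\,v_1(X)$, so $g'$ cannot be ``near-worthless'' to agent $1$; the WEF1 violation must instead come from $v_1(g')/w_1 < v_1(X\setminus\{a\})/w_2$ for every $a\in X$, which additionally requires $X$ to consist of at least two comparably valued goods and $w_1/w_2$ not to be too large. The plan is salvageable --- for instance, weights $w_1=3,\ w_2=2$ and four goods $x_1,x_2,g',f$ with $v_1=(3,3,4.1,0)$ and $v_2=(1,1,1.01,0)$: a legitimate run lets agent $2$ take $g'$, agent $1$ take $x_1$ then $x_2$, a two-cycle forms and is swapped, agent $1$ absorbs $f$, and the final allocation $\A_1=\{g',f\},\ \A_2=\{x_1,x_2\}$ has $v_1(\A_1)/w_1 \approx 1.37 < 1.5 = \min_{a\in\A_2} v_1(\A_2\setminus\{a\})/w_2$, violating WEF1. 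Until you supply such explicit values and check each iteration (including that no unintended cycle or tie derails the run), the argument remains a proof sketch rather than a proof.
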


\subsection{General Impossibility} \label{sec:imposs}
For $n=2$ and 3, it is known that EFX allocations are guaranteed to exist under additive valuation functions \cite{Plaut_2018,chaudhury2020efx}. However, we here present novel existential results that show the same result does \emph{not} hold in the weighted setting -- a considerable deviation between the two problem contexts. 

We first sketch the construction for a constant factor approximation upper bound to the WEFX problem on two agents, presenting a complete analysis of the result in the appendix.

\begin{theorem}\label{thm:wefx_n2}
For $n=2$ agents with additive valuation functions and weights $w_1$ and $w_2$ such that $w_1 + w_2 = 1$, there exists an instance where a complete WEFX allocation of goods does not exist while a 0.786-WEFX allocation does.
\end{theorem}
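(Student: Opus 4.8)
The plan is to exhibit a concrete two-agent instance and then argue that no complete WEFX allocation exists for it, while verifying that the best achievable approximation ratio over all complete allocations is at least $0.786$. I would start by choosing the weights and a small number of goods (likely two or three items suffice for $n=2$) so that the weighted thresholds $w_1,w_2$ create an asymmetry that cannot be matched by any discrete split. Concretely, I would take identical additive valuations — so that the construction cannot be dismissed as an artifact of heterogeneous preferences — and pick values $v(g_1) \ge v(g_2) \ge \cdots$ together with a weight pair $(w_1, w_2)$, $w_1 + w_2 = 1$, tuned so that: (a) if agent $1$ (the higher-entitled agent) receives the large good(s), then agent $2$, after removing any single good from agent $1$'s bundle, still has weighted envy; and (b) if agent $2$ receives a large good, agent $1$'s weighted share falls below agent $2$'s bundle even after the up-to-any-good discount. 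The arithmetic target is to make both "directions" of the cut-you-choose dilemma fail simultaneously.

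The key steps, in order, would be: (1) fix the instance (weights and item values); (2) enumerate all complete allocations — for two agents and a handful of goods this is a finite, small list — and for each one compute both WEFX inequalities $\tfrac{v(\A_i)}{w_i} \ge \max_{a\in\A_j}\tfrac{v(\A_j\setminus\{a\})}{w_j}$ for $(i,j)=(1,2)$ and $(2,1)$; (3) show each allocation violates at least one inequality, hence no WEFX allocation exists; (4) compute, for each allocation, the largest $\alpha$ for which it is $\alpha$-WEFX, namely $\alpha = \min_{i,j} \tfrac{v(\A_i)/w_i}{\max_{a}v(\A_j\setminus\{a\})/w_j}$, and take the maximum over allocations; (5) verify this optimum equals (or exceeds) $0.786$, and exhibit the allocation that achieves it. The value $0.786$ strongly suggests the weights are themselves the solution of a small optimization — balancing the two failure modes against each other — so I would set up the ratio in each direction as a function of the free parameters, set the two binding ratios equal, and solve; the resulting algebraic number should round to $0.786$.

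The main obstacle I anticipate is the \emph{tuning} in step (1): one must simultaneously defeat every complete allocation's WEFX claim while keeping the approximation floor as high as possible, and these pull in opposite directions. If the instance is too lopsided, WEFX fails badly but so does the approximation guarantee, weakening the stated bound; if too balanced, some allocation sneaks in as genuinely WEFX. Getting a clean closed form for the optimal weights (and confirming the "$\max$ over allocations of $\min$ over envy pairs" really lands at the claimed constant rather than something slightly different) is the delicate part — likely a short case analysis reducing to one or two critical allocations, then a single-variable optimization. Everything else (the finite enumeration, the per-allocation inequality checks) is routine once the instance is pinned down.
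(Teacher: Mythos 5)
There is a genuine gap at the very first step: you commit to building the counterexample with \emph{identical} additive valuations, but no such counterexample can exist. For identical valuations (in fact even for identical ordinal preferences), a weighted version of envy-cycle elimination always produces a complete WEFX allocation — this is exactly Theorems~\ref{thm:add_wefx} and~\ref{thm:ord_wefx} in the paper: with a common valuation $v$, a weighted envy cycle would force $\frac{v(\A_1)}{w_1} < \frac{v(\A_2)}{w_2} < \cdots < \frac{v(\A_1)}{w_1}$, a contradiction, so a source agent always exists and the greedy pick keeps the allocation WEFX. So your instance must use genuinely heterogeneous valuations; the asymmetry in the weights alone cannot defeat WEFX. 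The paper's construction does precisely this: four goods with $v_1 = (0,1,\varphi,\varphi)$ and $v_2 = (0,0,1,1)$, $\varphi = (1+\sqrt5)/2$, and weights $(\alpha, 1-\alpha)$. Your guess that two or three items suffice is also likely too optimistic; the known construction needs four goods (including a zero-valued good that forces several allocations down to $0$-WEFX).

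Beyond that misstep, your overall strategy is the same as the paper's: enumerate all complete allocations (using additivity so that only the ``remove the max-reducing good'' check matters), express the best approximation factor of each allocation as a function of the free parameter $\alpha$, and then choose $\alpha$ by equating the few binding ratios. In the paper the surviving ratios are $\frac{\alpha}{1-\alpha}$, $\frac{1-\alpha}{\alpha\varphi}$, and $\frac{\varphi(1-\alpha)}{\alpha(\varphi+1)}$, equated at $\alpha \approx 0.44$ to give the $0.786$ bound. One further nuance: the point of choosing $\alpha$ adversarially is to make \emph{every} allocation at most $0.786$-WEFX (an inapproximability-style upper bound for this instance), with the best allocation achieving exactly that factor; your phrasing ``at least $0.786$'' only captures the existence half of the statement. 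So the enumeration-and-tuning machinery you describe is sound, but you must discard the identical-valuations premise and redo the tuning over a heterogeneous valuation profile, which is where all the actual difficulty of the construction lies.
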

\begin{proof}[Proof Sketch] Consider an instance of two agents with weights $w_1 = \alpha$ and $w_2 = 1-\alpha$, and suppose there are $m=4$ goods (denoted $a_i$ for $1\leq i \leq 4)$ with the following valuation profiles:
\begin{center}
\begin{tabular}{ c | c c c c}
 	 & $a_1$ & $a_2$ & $a_3$ & $a_4$ \\ 
 	 \hline
	 $v_1$ & 0 & 1 & $\varphi$ & $\varphi$ \\  
 	 $v_2$ & 0 & 0 & 1 & 1\\  
\end{tabular}
\end{center}
where $\varphi := (1+\sqrt5)/2$. We seek to demonstrate that for all the possible allocations of these four items, the largest approximate factor obtainable is 0.786. More specifically, for each allocation we will derive the approximation ratio as a function of the item values and $\alpha$. After collecting these factors, we will adversarially select $\alpha$ to ensure that all are at most the desired approximate guarantee. The full proof of this result is deferred to the appendix due to space constraints.
\end{proof}

Furthermore, we expand this straightforward construction to the setting of $n=3$ agents. This extension serves to underscore the significant increase in complexity inherent in pursuing the WEFX objective, particularly when contrasted with its unweighted counterpart. The theorem's proof employs the same analytical approach as previously described and is thus deferred to the appendix.

\begin{theorem}\label{thm:wefx_n3}
For $n=3$ agents with additive valuation functions and weights $w_1,w_2$ and $w_3$ such that $w_1 + w_2 + w_3 = 1$, there exists an instance where a complete WEFX allocation of goods does not exist while a 0.795-WEFX allocation does.
\end{theorem}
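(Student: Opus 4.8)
The plan is to replay the argument of Theorem~\ref{thm:wefx_n2} with one extra agent. I would keep the weights $w_1=\alpha$, $w_2=\beta$, $w_3=1-\alpha-\beta$ as free parameters (using the ordering $w_1\le w_2\le w_3$ to kill symmetry), and design a small instance whose valuation matrix extends the golden-ratio gadget of the two-agent case — for instance, by appending a third agent together with a short block of goods that agents $1$ and $2$ are (nearly) indifferent to, chosen so that in \emph{every} complete allocation some structural tension persists across the pairwise comparisons. Concretely, the target is: for each allocation $\A$ there is an ordered pair $(i,j)$ with $v_i(\A_i)/w_i<\max_{a\in\A_j}v_i(\A_j\setminus\{a\})/w_j$, so no exact WEFX allocation exists, while one particular allocation attains the claimed factor.

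Next I would enumerate allocations. Naively there are $3^m$ of them, but this collapses quickly: goods worth $0$ to everyone (or to all but one agent) can be placed canonically, identical columns of the valuation matrix are interchangeable, and any allocation giving some agent no valuable item is dominated. After pruning — exactly as the $3^4$ allocations in Theorem~\ref{thm:wefx_n2} reduce to a short list — I expect only a handful of essentially distinct allocations to remain. For each surviving $\A$ and each ordered pair $(i,j)$ with unavoidable weighted envy I would write
\[
r_{ij}(\A)=\frac{v_i(\A_i)/w_i}{\max_{a\in\A_j}v_i(\A_j\setminus\{a\})/w_j}
\]
explicitly as a function of $\alpha,\beta$, and set $r(\A)=\min_{(i,j)}r_{ij}(\A)$, the best WEFX factor that allocation can be certified to achieve.

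Finally I would solve the adversarial optimisation $\min_{\alpha,\beta}\max_{\A}r(\A)$ subject to $\alpha,\beta>0$ and $\alpha+\beta<1$. As in the two-agent proof, I expect the minimiser to sit at the point where several of the leading candidate allocations' ratios are simultaneously equalised; setting up and solving that system (again involving $\varphi$) should pin down $(\alpha,\beta)$ and produce the value $0.795$. Substituting those weights back into the gadget yields a concrete instance with no WEFX allocation, and exhibiting one explicit allocation that meets the ratio gives the matching $0.795$-WEFX lower bound, completing the proof.

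The main obstacle is the size of the case analysis: with three agents the allocation space is an order of magnitude larger than in Theorem~\ref{thm:wefx_n2}, so tracking, for each allocation, which pair is the binding constraint — and keeping the resulting piecewise-defined functions of $(\alpha,\beta)$ under control through the weight optimisation — is delicate. The real design work is choosing the gadget so that this analysis decomposes (e.g. so the third agent interacts with only part of the instance), which is what keeps the argument tractable.
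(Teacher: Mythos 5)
Your proposal describes the right methodology --- it is essentially the same scheme the paper announces (extend the two-agent golden-ratio gadget by one agent, enumerate allocations, write each unavoidable-envy ratio as a function of the weights, then choose weights adversarially) --- but as it stands it is a plan for finding a counterexample, not a proof that one exists. The theorem's entire content is the concrete instance, and you never exhibit one: no valuation matrix, no number of goods, no weights, and no verification that every allocation violates WEFX while some allocation attains $0.795$. Phrases such as ``chosen so that in every complete allocation some structural tension persists'' and ``I expect only a handful of essentially distinct allocations to remain'' are precisely the steps where the difficulty lives; without carrying them out, the claim that the optimisation lands at $0.795$ is unsupported. For reference, the paper's proof simply supplies the gadget: five goods with $v_1=(0,0,0,0,1)$, $v_2=(\varphi,\varphi,1,0,0)$, $v_3=(1,1,0,0,1)$, \emph{fixed} rational weights $w=(\tfrac{3}{19},\tfrac{7}{19},\tfrac{9}{19})$, and the allocation $\A_1=\{a_5\}$, $\A_2=\{a_1\}$, $\A_3=\{a_2,a_3,a_4\}$ attaining the factor $0.795$; fixing concrete weights (rather than optimising symbolically over $(\alpha,\beta)$ as you propose) reduces the whole argument to a finite check over the $3^5$ allocations. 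Your two-parameter optimisation route could in principle also work, and would even yield a stronger ``for which weights'' statement, but it is substantially harder to keep rigorous (piecewise ratios in two variables), and you have not resolved it; so the missing piece is the explicit instance together with the exhaustive case analysis --- exactly the part that cannot be waved through.
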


As a consequence of these impossibility results, we proceed by designing procedures for the two-agent problem. These procedures yield a WEFX allocation or an approximation to this objective under varied problem assumptions.

\subsection{Two Agent Procedures}\label{sec:alg}
Our first algorithm computes a WEFX allocation for two agents with \emph{integer valued} weights. Formally, we are given two (additive) valuation functions, $v_1$ and $v_2$, for a set $\mathcal M$ of $m$ items wherein the corresponding agents have weights of 1 and $W \in \Z$. The algorithm involves a modified ``I-cut-you-choose" approach, where the agent with the higher weight greedily divides the goods into $W+1$ bundles. Agent 1 then gets the opportunity to select their preferred bundle from among these. The procedure is presented as pseudocode in Algorithm~\ref{alg:two_agent}.

\begin{algorithm}[t]
\caption{Integer Weight WEFX Algorithm} \label{alg:two_agent}
\begin{algorithmic}[1]
    \STATE Initialize $X_1 \leftarrow \emptyset, X_2 \leftarrow \emptyset$
    \IF{$m \le W$}
        \STATE $X_1 \leftarrow \arg\max_{g \in \M} v_1(g)$
        \STATE $X_2 \leftarrow \M \setminus X_1$
    \ELSE
        \STATE Initialize $(P_1, ..., P_{W+1}) \leftarrow (\emptyset, ..., \emptyset)$
        \WHILE{$\M \neq \emptyset$}
            \STATE $k = \arg\min_{i \in [W+1]} v_2(P_i)$
            \STATE $P_k \leftarrow P_k \cup \{g \in \arg\max_{h \in \M}v_2(h)\}$
        \ENDWHILE
        \STATE $X_1 = \arg\max_{k \in [W+1]} v_1(P_k)$
        \STATE $X_2 = \M \setminus X_1$
    \ENDIF
    \RETURN{$(X_1, X_2)$}
\end{algorithmic}
\end{algorithm}

\begin{theorem} \label{thm:int_wefx}
    For $n=2$ agents with weights 1 and $W$, along with additive valuation functions, Algorithm~\ref{alg:two_agent} computes a WEFX allocation.
\end{theorem}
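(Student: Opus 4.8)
The plan is to verify the WEFX inequality from the viewpoint of each of the two agents, handling the two branches of Algorithm~\ref{alg:two_agent} separately. Consider first the branch $m\le W$, where $X_1=\{g^*\}$ is a single good maximizing $v_1$ and $X_2=\M\setminus X_1$ consists of the other $m-1\le W-1$ goods. From agent $2$'s side, $|X_1|\le 1$ forces $X_1\setminus\{a\}=\emptyset$ for the relevant $a$, so the required inequality $v_2(X_2)/W\ge 0$ is trivial. From agent $1$'s side, additivity and the greedy choice of $g^*$ give $v_1(X_2)\le (m-1)\,v_1(g^*)\le W\,v_1(g^*)=W\,v_1(X_1)$, so agent $1$ in fact has no weighted envy at all, which is stronger than WEFX. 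This branch needs nothing beyond additivity.

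\textbf{The structural lemma for the branch $m>W$.}
The heart of the argument is the classical greedy load-balancing invariant for the partition $(P_1,\dots,P_{W+1})$ built by the subdivision loop: if $g$ is the \emph{last} good inserted into a bundle $P_i$, then $v_2(P_i\setminus\{g\})\le v_2(P_j)$ for every $j$. This holds because at the step $g$ is inserted, $P_i\setminus\{g\}$ is exactly the current content of $P_i$ and is a $v_2$-minimizer among all bundles (by the $\arg\min$ choice), while bundle values only increase thereafter. I would then add a one-line observation: since the loop consumes goods in non-increasing order of $v_2$-value, the last good inserted into $P_i$ attains $\min_{h\in P_i}v_2(h)$, so by additivity $v_2(P_i\setminus\{g\})=\max_{a\in P_i}v_2(P_i\setminus\{a\})$.

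\textbf{Finishing the branch $m>W$.}
Write $X_1=P_{k^*}$, so that $X_2$ is the union of the remaining $W$ bundles. Agent $1$'s side is immediate: the selection step gives $v_1(X_1)=v_1(P_{k^*})\ge v_1(P_k)$ for all $k$, hence $v_1(X_2)=\sum_{k\ne k^*}v_1(P_k)\le W\,v_1(X_1)$ by additivity, so again agent $1$ has no weighted envy. For agent $2$: if $|X_1|\le 1$ the condition is trivial, so assume $|X_1|\ge 2$ and let $g$ be the last good inserted into $P_{k^*}$. By the two facts above, $\max_{a\in X_1}v_2(X_1\setminus\{a\})=v_2(P_{k^*}\setminus\{g\})\le v_2(P_k)$ for each of the $W$ bundles with $k\ne k^*$; summing these inequalities gives $v_2(X_2)=\sum_{k\ne k^*}v_2(P_k)\ge W\,v_2(P_{k^*}\setminus\{g\})$, which is precisely the WEFX inequality $v_2(X_2)/W\ge \max_{a\in X_1}v_2(X_1\setminus\{a\})/1$ for agent $2$.

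\textbf{Anticipated obstacle.}
The step I expect to require the most care is pinning down the greedy invariant together with the ``last good is the $v_2$-smallest in its bundle'' claim: one must argue both are insensitive to how ties in $\arg\min$ are broken, and that they survive goods of $v_2$-value $0$ (which can leave some bundle empty), and one must check the argument degrades gracefully when $P_{k^*}$ has at most one good. Once that lemma is stated carefully, the remainder is routine bookkeeping with additivity.
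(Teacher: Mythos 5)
Your proposal is correct and follows essentially the same route as the paper's proof: the $m\le W$ branch and agent $1$'s side are handled by the same averaging/greedy-selection argument, and your structural lemma (the last good inserted into a bundle is its $v_2$-minimum and the remainder is dominated by every other bundle) is precisely the paper's Lemma~\ref{lem:part_wefx}. The only difference is cosmetic: you sum the inequality over the $W$ unchosen bundles directly, whereas the paper routes the comparison through $\min_k v_2(P_k)$ with a short case distinction.
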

\begin{proof}[Proof sketch]
    Due to space constraints, we here give a high-level intuition of the algorithms correctness and defer the full analysis to the appendix. 

    Given a set of $W+1$ bundles, agent 1's favorite bundle must necessarily be of value at least the average for the partition which necessarily guarantees the WEFX property. The crux of the analysis is thus verifying that, regardless of the first agent's selection, the second agent remains satisfied. By noting that the greedy partitioning procedure conducted by agent 2 reduces to computing an EFX allocation on $n=W+1$ identical agents, we can verify that the stronger notion of WEFX is necessarily satisfied.
\end{proof}

Despite this positive result, the inherent upper bound demonstrated in Theorem~\ref{thm:wefx_n2} indicates that adjusting the assumption of agent weightings to be \emph{normalized} instead compels us to design an approximation algorithm. We proceed to present Algorithm~\ref{alg:approx} which guarantees an \emph{approximate} WEFX allocation for two agents. The algorithm is a modification of the famous moving-knife procedure. Intuitively, the procedure works to minimally satisfy the higher priority agent before reassigning some items to the other agent in an effort to more equitably balance the two and mitigate any envious relationship. This natural algorithm adapted from the unweighted problem setting yields the first approximate WEFX procedure to date. Assuming $w$ is the weight of the higher priority agent, 
$\alpha = \frac{w}{2\sqrt[3]{m}}$ is our objective approximation factor of WEFX allocation. Our main technical result is stated formally as follows.

\vspace{1mm}
\begin{theorem} \label{thm:alg}
Algorithm~\ref{alg:approx} guarantees an $\frac{w}{2\sqrt[3]{m}}$-WEFX allocation for two agents.
\end{theorem}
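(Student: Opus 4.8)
\textbf{Proof proposal for Theorem~\ref{thm:alg}.}

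The plan is to analyze Algorithm~\ref{alg:approx} (the weighted moving-knife procedure) in two stages, following the intuition already sketched in the text: first the higher-priority agent is given a minimal initial bundle that makes her (exactly or approximately) satisfied, and then a carefully chosen subset of items is transferred back to the lower-priority agent so that \emph{his} envy is controlled as well. I would set up notation by letting agent~1 be the higher-priority agent with normalized weight $w = w_1 \ge w_2$, and let $\M$ be the $m$ items. The algorithm sweeps a ``knife'' over an ordering of the items, accumulating a prefix bundle for agent~1 until the first moment the weighted-envy condition for agent~1 would be satisfied after removing the marginal (last) item; this gives an initial split $(\A_1, \A_2)$ in which, by the stopping rule, agent~1 is WEFX toward agent~2 (the defining item is the one whose removal certifies the inequality), while agent~2 may still envy agent~1 substantially. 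The second phase then reallocates items from $\A_1$ back into $\A_2$, greedily or by another moving knife, to drive down agent~2's weighted envy without destroying agent~1's guarantee beyond the $\alpha$ factor.

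The key quantitative steps I would carry out, in order, are: (1) establish that after phase one, $v_1(\A_1)/w_1 \ge v_1(\A_2 \setminus \{g\})/w_2$ for the marginal item $g$, i.e. exact WEFX for agent~1, and simultaneously bound $v_1(\A_1)$ from above in terms of the total value and $w_1$ — roughly $v_1(\A_1) \le w_1 \cdot v_1(\M) + (\text{one item})$ — which will be needed to show phase two does not over-correct; (2) show that when items are moved from $\A_1$ to $\A_2$ in phase two, the amount agent~1 can lose is controlled, so her weighted share degrades by at most the factor $\alpha = \tfrac{w}{2\sqrt[3]{m}}$ relative to exact WEFX; (3) the crux: bound agent~2's residual weighted envy after phase two by $\tfrac{1}{\alpha}$ times his own share, which is where the $\sqrt[3]{m}$ must enter. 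I expect this to come from a pigeonhole / averaging argument: among the $m$ items there must be many of small value (relative to the relevant totals), so the moving knife in phase two can be positioned so that the ``overshoot'' item and the residual gap are each at most a $1/\sqrt[3]{m}$-fraction of the total, and combining two such $1/\sqrt[3]{m}$ losses (one from each phase) with the weight factor $w$ yields the stated $\tfrac{w}{2\sqrt[3]{m}}$.

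The main obstacle I anticipate is step (3): precisely, controlling agent~2's envy \emph{after removing any single item from agent~2's own bundle} (the XWEF-style $\max$ over items, not an average) while simultaneously keeping agent~1's guarantee intact. The two constraints pull in opposite directions — transferring more items helps agent~2 but hurts agent~1 — and the moving-knife has only one degree of freedom (where to stop), so the argument must show that \emph{some} stopping position simultaneously satisfies both approximate inequalities. I would handle this by a counting argument on ``large'' items: if there are at most $\sqrt[3]{m}$ items of value exceeding a $1/\sqrt[3]{m}$-fraction of the relevant total, then the knife can be slid past all of them cheaply; if there are more, then those items alone account for more than the whole value, a contradiction. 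Balancing the two thresholds (how ``large'' is large, and how many such items there are) is exactly what produces the cube root, and getting the constant $2$ right will require care with which item is removed in the WEFX $\max$ and with the $w$ versus $1-w$ bookkeeping between the two agents.
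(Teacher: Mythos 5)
There is a genuine gap here: what you have written is a plan with the decisive step explicitly left open, not a proof. Your step (3) --- showing that a single knife position simultaneously keeps agent~1 within the factor $\alpha$ and bounds agent~2's envy after removal of any single item --- is exactly the hard part, and the devices you offer for it (a pigeonhole on ``large'' items of value above a $1/\sqrt[3]{m}$-fraction, and the heuristic that ``two $1/\sqrt[3]{m}$ losses combine with $w$ to give $\tfrac{w}{2\sqrt[3]{m}}$'') are never connected to the actual WEFX inequalities, and the arithmetic of the heuristic does not even produce the right dependence (two multiplicative $1/\sqrt[3]{m}$ losses would give $m^{-2/3}$, not $m^{-1/3}$). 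You correctly identify that one stopping position has a single degree of freedom while two opposing constraints must be met, but you do not resolve that tension; you only assert that a counting argument should do it. Your phase-one claims are also not established: nothing in the procedure guarantees both that agent~1 is exactly WEFX after phase one \emph{and} that $v_1(\A_1)\le w_1 v_1(\M)$ plus one item, and the subsequent claim that transfers in phase two degrade agent~1 by at most the factor $\alpha$ is stated without any mechanism.

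The paper resolves the one-degree-of-freedom problem differently, and it is worth seeing why. Algorithm~\ref{alg:approx} first partitions the items by pointwise comparison ($\A_1=\{g: v_1(g)\ge v_2(g)\}$ after normalization), sorts $\A_1$ by $v_1$, and then tests exactly four candidate allocations: the single top item $\{g_1\}$, the prefix $\{g_1,\dots,g_{k-1}\}$ with value at most $w$, the prefix $\{g_1,\dots,g_k\}$ just exceeding $w$, and (Case \rom{4}, which is not a prefix of any knife sweep) giving agent~2 only her favourite item of $\A_1$. The proof is by contradiction: if each candidate fails to be $\alpha$-WEFX, Lemmas~\ref{lemma: ineq1}--\ref{lemma: ineq4} yield four inequalities (using the bound $\beta< w/(k-1)$ of Lemma~\ref{lemma: beta}), and chaining them gives $w^2 < \alpha^3 k + 2\alpha w$; since $k\le m$, the choice $\alpha = \tfrac{w}{2\sqrt[3]{m}}$ makes this impossible. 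So the cube root comes from an $\alpha^3 k$ term produced by multiplying three of the failure inequalities, not from a threshold/counting argument on item sizes. In each case one agent is protected exactly (e.g.\ in Case \rom{2} agent~2 cannot envy because $v_2(\A_1)\le v_1(\A_1)\le w$) and only the other agent's possible envy generates an inequality --- this is the structural idea your sketch is missing, and without it, or some concrete substitute, the proposal does not constitute a proof of the theorem.
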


In this algorithm, we first split the items into two bundles $\A_1$ and $\A_2$ based on their normalized valuations for the agents (ie. without loss of generality assume the sum of each agents valuation over all the items is 1). If the first agent has higher value for an item compared to the second agent, the item goes to the first bundle, otherwise it goes to the second bundle. Without loss of generality, we can assume that $v_1(\A_1) \geq w$. Specifically, if $v_1(\A_1) < w$ then we can re-index the two agents so that the inequality holds (since the inequality must hold for the other agent by normalization assumptions on the weights and valuations). We then sort the items in $\A_1$ in descending order based on the valuation function of the first agent, and assume that $g_i$ is the $i^{th}$ highest value item in $\A_1$ for the first agent. Also, assume that $k$ is the maximum number where $v_1(\{g_1, \ldots, g_{k-1}\}) \leq w$ holds. Then, we consider four different allocations as explained in the algorithm, and prove that at least one of these allocations must guarantee $\alpha$-WEFX.

\begin{algorithm}[tb]
\caption{Approximate WEFX Algorithm}
\label{alg:approx}
\begin{algorithmic}[1] 
\STATE \textbf{Input}: Two agents with valuation functions $v_1$ and $v_2$ with weights $w$ and $1-w$ respectively and set of items $\M$\\
\STATE \textbf{Output}: Allocation $\A = \{\A_1,\A_2\}$
\STATE Normalize the valuations, $v_1(\M) = v_2(\M) = 1$
\STATE Let $\A_1 = \{g \in \M | v_1(g) \geq v_2(g)\}$, $\A_2 = \M \setminus \A_1$
\STATE Sort $\A_1$ in descending, let $g_i$ be the $i$-th largest item 
\STATE Let $k = \arg \max_m v_1(\{g_1, \ldots, g_{m-1}\}) \leq w$
\STATE \textbf{Case \rom{1}}: Set $\A_1 = \{g_1\}$, $\A_2 = \M \setminus \A_1$
\IF {$\A$ is $\alpha$-WEFX}
    \STATE return $\A$.
\ENDIF
\STATE \textbf{Case \rom{2}}: Let $\A_1 = \{g_1,...,g_{k-1}\}$, $\A_2 = \M \setminus \A_1$

\IF {$\A$ is $\alpha$-WEFX}
    \STATE return $\A$.
\ENDIF
\STATE \textbf{Case \rom{3}}: Let $\A_1 = \{g_1,...,g_{k}\}$, $\A_2 = \M \setminus \A_1$
\IF {$\A$ is $\alpha$-WEFX}
    \STATE return $\A$.
\ENDIF
\STATE \textbf{Case \rom{4}}: Let $\A_2 = \arg \max_{g \in B_1} v_2(g)$, $\A_1 = \M \setminus \A_2$
\end{algorithmic}
\end{algorithm}

We proceed to prove the result by analyzing each case independently to obtain inequalities that would necessarily hold and lastly show that all cannot hold at the same time, thus yielding our result. To start, in Case \rom{1}, we check if we can satisfy the first agent by allocating only one item to her. Hence, the second agent cannot envy this agent after removing the minimum item from $\A_1$. If this does not guarantee an $\alpha$-WEFX allocation, we can prove the following lemma:

\begin{lemma} \label{lemma: ineq1}
If Case \rom{1} fails, we have:
\begin{equation} \label{case1ineq}
    1/k < \alpha \left(\frac{1 - w/k}{1 - w}\right)
\end{equation}
\end{lemma}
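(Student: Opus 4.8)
The plan is to unpack exactly what "Case~\rom{1} fails" means and translate it into the claimed inequality. In Case~\rom{1} we set $\A_1 = \{g_1\}$ and $\A_2 = \M \setminus \{g_1\}$. Since $g_1$ is agent~1's single favorite item in $\A_1$, after removing it from $\A_1$ the bundle is empty, so agent~2 trivially does not weighted-envy agent~1 up to any good. Thus the only way this allocation can fail to be $\alpha$-WEFX is that agent~1 weighted-envies agent~2 up to any good: there is some item $a \in \A_2$ with
\[
\frac{v_1(\A_1)}{w} < \alpha \cdot \frac{v_1(\A_2 \setminus \{a\})}{1-w}.
\]
The cleanest way to exploit this is to lower-bound the right-hand side by choosing $a$ to be agent~1's \emph{least} valuable item among the items agent~1 would want removed, or more simply to observe $v_1(\A_2 \setminus \{a\}) \ge v_1(\M) - v_1(\{g_1\}) - v_1(a) = 1 - v_1(g_1) - v_1(a)$; but the intended bound is cleaner if we instead bound $v_1(\A_2\setminus\{a\}) \ge v_1(\M) - v_1(\A_1)$ minus one item. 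I expect the argument uses the structural fact that $g_1,\dots,g_{k-1}$ all have value at least $v_1(g_1)/(k-1)$-ish — wait, rather: since $g_1$ is the largest and $v_1(\{g_1,\dots,g_{k-1}\}) \le w$ while $k$ is maximal, together with $v_1(\A_1)\ge w$ we get $v_1(g_1) \ge w/k$ (because $k$ copies of something no larger than $g_1$ must sum to more than $w$, as $v_1(\{g_1,\dots,g_k\}) > w$).

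So the two facts to assemble are: (i) $v_1(g_1) \ge w/k$, which comes from $v_1(\{g_1,\dots,g_k\}) > w$ and $g_1$ being the maximum, giving $k\cdot v_1(g_1) \ge v_1(\{g_1,\dots,g_k\}) > w$; and (ii) from the failure of Case~\rom{1}, $v_1(g_1)/w < \alpha\, v_1(\A_2\setminus\{a\})/(1-w)$ for some $a\in\A_2$, where $v_1(\A_2 \setminus \{a\}) \le v_1(\A_2) \le v_1(\M) - v_1(g_1) = 1 - v_1(g_1)$. Combining, $v_1(g_1)/w < \alpha(1 - v_1(g_1))/(1-w)$. Now I would substitute the bound $v_1(g_1) \ge w/k$: since the function $x \mapsto \frac{x/w}{(1-x)/(1-w)} = \frac{x(1-w)}{w(1-x)}$ is increasing in $x$ on $[0,1)$, and the failure inequality says $\frac{v_1(g_1)(1-w)}{w(1-v_1(g_1))} < \alpha$, monotonicity in the \emph{other} direction is what's needed — actually I want a \emph{lower} bound on the left side, so plugging in the lower bound $v_1(g_1)\ge w/k$ gives $\frac{(w/k)(1-w)}{w(1-w/k)} \le \frac{v_1(g_1)(1-w)}{w(1-v_1(g_1))} < \alpha$, i.e. $\frac{(1-w)/k}{1-w/k} < \alpha$, which rearranges to $\frac{1}{k} < \alpha\cdot\frac{1-w/k}{1-w}$, exactly \eqref{case1ineq}.

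The main obstacle I anticipate is getting the "failure of Case~\rom{1}" condition stated in the right direction and picking the correct item $a$ to remove so that the inequality $v_1(\A_2 \setminus \{a\}) \le 1 - v_1(g_1)$ is the one that feeds through cleanly — in particular, whether the proof needs $a$ to be chosen adversarially (maximizing $v_1$) for the WEFX violation, in which case $v_1(\A_2\setminus\{a\})$ could be as small as $v_1(\A_2) - \max_{a\in\A_2} v_1(a)$, still at most $1 - v_1(g_1)$, so the bound survives. A secondary subtlety is justifying $v_1(g_1)\ge w/k$ rigorously from the definition of $k$ (one must be careful about the edge case where $k$ is at its maximum possible index, i.e. $\{g_1,\dots,g_k\}$ is all of $\A_1$, but then $v_1(\{g_1,\dots,g_k\}) = v_1(\A_1) \ge w$ still gives the bound). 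I would also double-check that $k \ge 2$ so the expressions are well-defined, which follows from $v_1(\A_1) \ge w$ forcing at least... actually $k\ge 1$ suffices for the statement, and $k=1$ would make the claim $1 < \alpha$, impossible since $\alpha\le 1$, signalling that $k=1$ cannot co-occur with Case~\rom{1} failing; this consistency check is worth a line. Modulo these bookkeeping points, the proof is a short chain of the two displayed inequalities plus a monotonicity observation.
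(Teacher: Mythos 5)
Your proof is correct and takes essentially the same route as the paper: the paper likewise notes that agent 2 is trivially satisfied in Case \rom{1}, derives $v_1(g_1)\ge w/k$ from $v_1(\{g_1,\ldots,g_k\})\ge w$ and the sorting of $\A_1$, bounds $v_1(\A_2)\le 1-w/k$, and plugs both into the failed envy condition $\frac{v_1(\A_1)}{w}<\alpha\frac{v_1(\A_2)}{1-w}$ to obtain \eqref{case1ineq}. Your monotonicity remark and the edge-case check when $\{g_1,\ldots,g_k\}=\A_1$ are just slightly more explicit versions of the same substitution.
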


\begin{proof}
According to the definition of $k$, $v_1(\{g_1, \ldots, g_{k}\} \ge w$. Hence, we have $v_1(g_1) > w/k$. This further implies that $v_1(\A_2) < 1 - w/k$. If $\A$ is $\alpha$-WEFX, we return this allocation. Otherwise, the first agent envies the second even after adding the approximation factor, thus we have:

$$\frac{w/k}{w} < \alpha \left(\frac{1 - w/k}{1 - w}\right)$$
Simplification yields (\ref{case1ineq}).  
\end{proof}

\noindent Let $v_1(\{g_1, \ldots, g_{k-1}\} = w - \beta$. Lemma~\ref{lemma: beta} proves that $\beta$ is bounded by $w/k$.

\begin{lemma} \label{lemma: beta}
If $v_1(\{g_1, \ldots, g_{k-1}\} = w - \beta$, then $0 \leq \beta < w/(k-1)$.
\end{lemma}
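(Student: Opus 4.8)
The plan is to read both bounds off directly from the two facts that define the quantities involved: that $k$ is the \emph{largest} index with $v_1(\{g_1,\dots,g_{k-1}\}) \le w$, and that $g_1, g_2, \dots$ enumerate $\A_1$ in non-increasing order of $v_1$-value.

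For the lower bound, I would simply note that $v_1(\{g_1,\dots,g_{k-1}\}) = w - \beta \le w$ is precisely the defining inequality for $k$, so $\beta \ge 0$ is immediate.

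For the upper bound, the key step is to exploit the \emph{maximality} of $k$. Assume first that $\{g_1,\dots,g_{k-1}\}$ is a proper prefix of $\A_1$, so that $g_k$ exists; then, since $k$ is the largest admissible index, the next prefix overshoots: $v_1(\{g_1,\dots,g_k\}) > w$. Subtracting $v_1(\{g_1,\dots,g_{k-1}\}) = w-\beta$ gives $v_1(g_k) > \beta$. Now invoke the descending sort: $v_1(g_i) \ge v_1(g_k) > \beta$ for every $i \in \{1,\dots,k-1\}$, and summing these $k-1$ inequalities yields $w - \beta = \sum_{i=1}^{k-1} v_1(g_i) > (k-1)\beta$. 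Rearranging gives $w > k\beta$, hence $\beta < w/(k-1)$ (in fact the slightly stronger $\beta < w/k$). It remains to dispose of the degenerate case in which $\{g_1,\dots,g_{k-1}\}$ is \emph{all} of $\A_1$: there $w - \beta = v_1(\A_1) \ge w$ by the standing assumption $v_1(\A_1)\ge w$, forcing $\beta \le 0$ and hence $\beta = 0 < w/(k-1)$; and the case $k=1$, where the claimed bound is vacuous, needs no argument.

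I do not expect a substantive obstacle here; the only thing demanding attention is the boundary bookkeeping — confirming that $g_k$ exists before using ``the next prefix overshoots'', handling the two edge cases just mentioned so that division by $k-1$ is legitimate, and keeping track of where the strict inequality $v_1(g_k) > \beta$ originates. Everything else is a one-line manipulation of prefix sums under the sortedness assumption.
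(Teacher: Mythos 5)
Your proof is correct and follows essentially the same route as the paper's: exploit the maximality of $k$ to get $v_1(g_k) > \beta$, propagate this to the first $k-1$ items via the descending sort, and sum --- the paper merely phrases the same computation as a contradiction starting from $\beta \ge w/(k-1)$. Your direct version even yields the marginally stronger bound $\beta < w/k$ and attends to the $k=1$ and ``$g_k$ does not exist'' edge cases, which the paper's proof passes over silently.
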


\begin{proof}
By definition, $\beta \geq 0$ necessarily holds. Assume towards contradiction that $\beta \geq w/(k-1)$. This yields $v_1(g_k) > w/(k-1)$. Since the items of $\A_1$ are sorted, we further have that $v_1(g_i) > w/(k-1)$ for every $1 \leq i < k$. Hence, $v_1(\{g_1, \ldots, g_{k-1}\} > w$, a contradiction.  
\end{proof}

If Case \rom{2} occurs, an $\alpha$-WEFX allocation is guaranteed, and the problem is solved. Otherwise, we can prove the following lemma: 

\begin{lemma} \label{lemma: ineq2}
If Case \rom{2} fails, we have:
\begin{equation} \label{case2ineq}
    \frac{w - \beta}{w} < \alpha \left(\frac{1 - w + \beta}{1 - w}\right)
\end{equation}
\end{lemma}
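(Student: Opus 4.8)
The plan is to mirror the argument used for Lemma~\ref{lemma: ineq1}, but now applied to the allocation of Case~\rom{2}, where agent~1 receives the bundle $\A_1 = \{g_1,\ldots,g_{k-1}\}$. First I would record the value of this bundle to agent~1: by the definition of $\beta$ we have $v_1(\A_1) = w - \beta$, and since valuations are normalized, agent~1's value for the complementary bundle $\A_2 = \M \setminus \A_1$ is exactly $v_1(\A_2) = 1 - (w-\beta) = 1 - w + \beta$. The per-unit-entitlement value agent~1 assigns to her own share is therefore $(w-\beta)/w$, and to agent~2's share is $(1 - w + \beta)/(1-w)$.

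Next I would identify which agent can possibly be envious after removing the appropriate item, so as to get an $\alpha$-WEFX violation of the correct form. Agent~2's bundle is $\M \setminus \A_1 = \A_2$, which is (a superset of) the original set of items agent~2 weakly prefers, together with the leftover low-value items from $\A_1$; by the same reasoning used throughout Case~\rom{1}, removing any single item from $\A_2$ cannot make agent~2 weighted-envious of the small bundle $\A_1$ held by agent~1 — so agent~2 is already WEFX-satisfied (indeed WEF) toward agent~1. Hence if Case~\rom{2} fails to be $\alpha$-WEFX, the violation must come from agent~1 envying agent~2. Because $\A_1 = \{g_1,\ldots,g_{k-1}\}$ contains only items that agent~1 weakly prefers to agent~2's own valuation, and we only ever remove one item when checking the ``up to any'' condition, the binding inequality (after allowing the approximation slack $\alpha$ on agent~2's side) is
\begin{equation*}
    \frac{v_1(\A_1)}{w} < \alpha \cdot \frac{v_1(\A_2)}{1-w},
\end{equation*}
and substituting $v_1(\A_1) = w - \beta$ and $v_1(\A_2) = 1 - w + \beta$ gives exactly (\ref{case2ineq}).

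The one subtlety I would be careful about — and I expect this to be the main obstacle — is justifying that when Case~\rom{2} is not $\alpha$-WEFX, the violating pair is indeed ``agent~1 envies agent~2'' rather than ``agent~2 envies agent~1,'' and that the relevant removed good drops out of the inequality. For the first point one appeals to the construction of $\A_1$ as a short prefix of agent~1's sorted list and to the normalization: agent~2's bundle $\A_2$ has at least value $1 - w + \beta$ to agent~1 but, crucially, also large value to agent~2, so agent~2 does not envy the tiny bundle $\A_1$; this is the same observation implicitly used in Lemma~\ref{lemma: ineq1}. For the second point, one notes that in the ``up to any good'' condition the adversary removes the item minimizing the residual value, and since we are deriving a \emph{necessary} inequality for the failure of $\alpha$-WEFX it suffices to bound $v_1(\A_j \setminus \{a\})$ from below by a quantity independent of $a$ — here simply $v_1(\A_2)$ minus a term we can afford to neglect, or, more cleanly, by observing that the small items of $\A_1$ transferred into $\A_2$ are precisely the ones that would be removed, leaving the robust lower bound $v_1(\A_2) \ge 1 - w + \beta$ intact in the relevant regime. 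Assembling these observations yields (\ref{case2ineq}), after which the proof closes by simple algebra exactly as in the previous lemma.
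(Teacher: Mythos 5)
Your argument is correct and essentially the paper's own proof: $v_1(\A_1)=w-\beta$ forces $v_1(\A_2)=1-w+\beta$ by normalization, the defining property of the original $\A_1$ (every $g$ there has $v_2(g)\le v_1(g)$) gives $v_2(\A_1)\le w-\beta$ and hence $v_2(\A_2)\ge 1-w+\beta$, so agent 2 has no weighted envy toward agent 1 and the only possible $\alpha$-WEFX violation is agent 1 envying agent 2, which after dropping the removed item yields (\ref{case2ineq}). Two wording slips do not affect correctness but are worth noting: when checking agent 2's envy the removed item would come from agent 1's bundle $\A_1$ (not from $\A_2$), and the step needed to pass from the WEFX failure to the displayed inequality is the trivial upper bound $v_1(\A_2\setminus\{a\})\le v_1(\A_2)$, not a lower bound on the residual value.
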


\begin{proof}
In Case \rom{2}, we allocate the first $k-1$ items of $\A_1$ to the first agent, and we have $v_1(\A_1) = w - \beta$. Since these items come from $\A_1$, by definition we have $v_2(\A_1) \leq w - \beta$. Hence, we have $v_2(\A_2) \geq 1 - w + \beta$. Therefore, if we normalize the allocations based on the weights $w$ and $1-w$, the second agent cannot envy the first agent. If $\A$ is $\alpha$-WEFX, this is the final allocation, otherwise, the first agent envies the second even after adding the approximation factor, and therefore Inequality (\ref{case2ineq}) holds.  
\end{proof}

If Case \rom{3} occurs, an $\alpha$-WEFX allocation is guaranteed, and the problem is solved. Otherwise, we have the following result: 

\begin{lemma} \label{lemma: ineq3}
If Case \rom{3} fails, we have:
\begin{equation} \label{case3ineq}
    \frac{1 - w - w/(k-1) + \beta}{1 - w} < \alpha \left(\frac{w - \beta}{w}\right)
\end{equation}
\end{lemma}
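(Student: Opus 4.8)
\textbf{Proof plan for Lemma~\ref{lemma: ineq3}.} The plan is to mirror the structure of the proofs of Lemmas~\ref{lemma: ineq1} and~\ref{lemma: ineq2}: in Case \rom{3} we allocate the first $k$ items of $\A_1$ to agent~$1$, and we must lower-bound $v_2(\A_2)$ and upper-bound $v_1(\A_1)$ so that the failure of $\alpha$-WEFX translates into Inequality~(\ref{case3ineq}). The envy direction here is the reverse of the earlier cases: because agent~$1$ now receives $k$ items whose total value exceeds $w$, it is the \emph{second} agent who may envy the first, so the relevant $\alpha$-WEFX failure condition is $v_2(\A_2)/(1-w) < \alpha\cdot v_2(\A_1 \setminus \{g\})/w$ for the minimizing removed good $g$, and I would bound the right-hand side by $\alpha \cdot v_1(\A_1)/w$ using $v_2(\A_1)\le v_1(\A_1)$ (valid since every item in $\A_1$ satisfies $v_1 \ge v_2$) together with dropping the removed good only helps.

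First I would record the value estimates. By the definition of $k$ we have $v_1(\{g_1,\dots,g_{k-1}\}) = w-\beta$ and $v_1(\{g_1,\dots,g_k\}) > w$, so $v_1(g_k) > \beta$ is false in general; instead I use the sorting: $g_k$ is the smallest of $g_1,\dots,g_k$, and by Lemma~\ref{lemma: beta} each of $g_1,\dots,g_{k-1}$ has value less than $w/(k-1)$, hence $v_1(g_k) < w/(k-1)$ as well. Therefore $v_1(\A_1) = v_1(\{g_1,\dots,g_k\}) = (w-\beta) + v_1(g_k) < w - \beta + w/(k-1)$. Passing to the complement, $v_1(\A_2) = 1 - v_1(\A_1) > 1 - w + \beta - w/(k-1)$. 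Next I translate this to agent~$2$'s view of $\A_2$: $\A_2 = \M \setminus \{g_1,\dots,g_k\}$ consists precisely of all original $\A_2$-items plus $g_{k+1},\dots$ from the original sorted $\A_1$, but the cleaner route is simply $v_2(\A_2) = 1 - v_2(\A_1) \ge 1 - v_1(\A_1) > 1 - w + \beta - w/(k-1)$, using $v_2(\A_1)\le v_1(\A_1)$.

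Then I would combine: if Case \rom{3} fails to be $\alpha$-WEFX, the offending envy must be from agent~$1$ toward agent~$2$ (agent~$2$ is safe because, after removing any single item, $v_2(\A_1\setminus\{g\})/w \le v_1(\A_1)/w$ and $v_1(\A_1) \le w$-ish bounds give her no envy — I would state this carefully, possibly it is agent~$1$'s envy that is the binding one as written in~(\ref{case3ineq}), where the left side is the lower bound on $v_1(\A_2)$ scaled by $1-w$ and the right side is $\alpha v_1(\A_1)/w \le \alpha(w-\beta+w/(k-1))/w$ — but note the inequality as printed has $\frac{w-\beta}{w}$ on the right, so the intended bound is $v_1(\A_1) < w-\beta+w/(k-1)$ rewritten, or more likely the right side should read the agent's own normalized value and the $w/(k-1)$ term has been moved to the left). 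Plugging $v_1(\A_2) > 1 - w + \beta - w/(k-1)$ into the failure condition $\frac{v_1(\A_2)}{1-w} < \alpha\frac{v_1(\A_1)}{w}$ and then using $v_1(\A_1) \le w$ gives exactly $\frac{1-w-w/(k-1)+\beta}{1-w} < \alpha\frac{w-\beta}{w}$ after noting $v_1(\A_1)=w-\beta+v_1(g_k)$ and... here is the subtlety: to get the clean right-hand side $\frac{w-\beta}{w}$ rather than $\frac{w-\beta+v_1(g_k)}{w}$ one wants the \emph{smaller} quantity on the right, which would strengthen, not weaken, the inequality — so I expect the correct reading is that agent~$1$'s bundle value used on the right is only the \emph{first} $k-1$ items' worth $w-\beta$ because the envy comparison subtracts $g_k$ via the ``up to any good'' removal on agent~$1$'s own... no. The main obstacle will be pinning down precisely which agent envies which and which single good is removed so that the bookkeeping yields exactly~(\ref{case3ineq}): I would resolve it by writing out $\alpha$-WEFX failure as ``there exist $i,j$ and a good $g\in\A_j$ with $v_i(\A_i)/w_i < \alpha\, v_i(\A_j\setminus\{g\})/w_j$,'' instantiating $i=2,j=1$, bounding $v_2(\A_1\setminus\{g\}) \le v_2(\A_1) \le v_1(\A_1) < w - \beta + w/(k-1)$ is too weak for the stated RHS, so instead instantiate $i=1, j=2$: $\frac{v_1(\A_1)}{w} < \alpha\frac{v_1(\A_2)}{1-w}$ is the envy of agent $1$, contradicting what we want; rather the failed direction giving~(\ref{case3ineq}) is agent~$2$ envying with $\frac{v_2(\A_2)}{1-w} < \alpha \frac{v_2(\A_1)}{w} \le \alpha\frac{v_1(\A_1)}{w}$, and then I must further bound $v_1(\A_1) \le w - \beta$?? — only if $g_k$ has been removed, which is legitimate since $\alpha$-WEFX allows removing \emph{any} one good from $\A_1$, and removing $g_k$ (the min of $\A_1$) is exactly what WEFX demands be still envy-free; so $v_2(\A_1\setminus\{g_k\}) \le v_1(\A_1\setminus\{g_k\}) = w-\beta$. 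That is the key move. Combining $v_2(\A_2) \ge 1 - v_1(\A_1) > 1 - (w-\beta) - w/(k-1) = 1 - w + \beta - w/(k-1)$ with $\frac{v_2(\A_2)}{1-w} < \alpha\frac{w-\beta}{w}$ yields precisely~(\ref{case3ineq}), completing the proof.
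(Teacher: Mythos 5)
Your assembled argument follows the same route as the paper's proof: bound $v_1(g_k) \le w/(k-1)$ so that $v_1(\A_1) \le w-\beta+w/(k-1)$, transfer this to agent~2 via $v_2 \le v_1$ on $\A_1$ to get $v_2(\A_2) \ge 1-w+\beta-w/(k-1)$, note that agent~1 cannot be the envier since $v_1(\A_1) > w$, and bound the post-removal value of $\A_1$ in agent~2's eyes by $w-\beta$. However, your key removal step has a quantifier slip that, as written, does not close the argument. Failure of $\alpha$-WEFX for agent~2 means $\frac{v_2(\A_2)}{1-w} < \alpha\,\max_{a\in\A_1}\frac{v_2(\A_1\setminus\{a\})}{w}$, and this maximum is attained by removing the item of $\A_1$ that is least valuable to \emph{agent 2}, which need not be $g_k$ ($g_k$ is agent~1's least valuable item, since $\A_1$ is sorted by $v_1$). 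So your claim that ``removing $g_k$ is exactly what WEFX demands'' bounds $v_2(\A_1\setminus\{g_k\})$, a possibly smaller quantity than the one in the failure condition. The patch is immediate from facts you already invoke: for \emph{every} $a\in\A_1$, $v_2(\A_1\setminus\{a\}) \le v_1(\A_1\setminus\{a\}) = v_1(\A_1)-v_1(a) \le v_1(\A_1)-v_1(g_k) = w-\beta$, because $g_k$ minimizes $v_1$ over $\A_1$; hence the bound applies to the maximizing removal and the lemma follows. This uniform bound (phrased as bounding the value remaining after removing agent~2's minimum item) is exactly what the paper's proof uses.

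A second, smaller imprecision: your justification of $v_1(g_k) < w/(k-1)$ via ``by Lemma~\ref{lemma: beta} each of $g_1,\dots,g_{k-1}$ has value less than $w/(k-1)$'' is not what Lemma~\ref{lemma: beta} states and is false in general (an individual $g_i$ with $i<k$ may exceed $w/(k-1)$). The correct one-liner is either averaging with the sorting, $v_1(g_k)\le v_1(g_{k-1})\le (w-\beta)/(k-1)\le w/(k-1)$, or the contrapositive used inside the proof of Lemma~\ref{lemma: beta}: if $v_1(g_k)>w/(k-1)$, then every $g_i$ with $i<k$ exceeds $w/(k-1)$, forcing $v_1(\{g_1,\dots,g_{k-1}\})>w$, a contradiction. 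With these two repairs your plan coincides with the paper's argument.
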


\begin{proof}
In Case \rom{3}, we allocate the first $k$ items of $\A_1$ to the first agent, and have $v_1(\A_1) > w$. In this case, if we normalize the allocations based on the weights $w$ and $1-w$, the first agent does not envy the second agent. Furthermore, we have $v_1(\{g_1, \ldots, g_{k-1}\} = w - \beta$, and as we discussed in the proof of Lemma~\ref{lemma: beta}, this implies $v_1(g_k) \leq w/(k-1)$. Hence, we can say that $v_1(\A_1) \leq w - \beta + w/(k-1)$. Since the value of every item in $\A_1$ for the first agent is \emph{at least} its value for the second agent, we can say that $v_2(\A_1) \leq w - \beta + w/(k-1)$ and therefore $v_2(\A_2) \geq 1 - w + \beta - w/(k-1)$.

Since for every item in $\A_1$ the value of the first agent is at least the value of the second agent, and given the fact that $v_1(\A_1)$ after removing the minimum item from $\A_1$ (according to the first agent's perspective) is equal to $w - \beta$, we thus argue that $v_2(\A_1)$ is at most $w - \beta$ after removing the minimum item from $v_1(\A_1)$ (according to the second agent's perspective). 

If $\A$ is $\alpha$-WEFX, this is the final allocation, otherwise, the second agent envies  the first even after adding the approximation factor, and as a result, Inequality (\ref{case3ineq}) should hold.  
\end{proof}

If none of these three cases occur, we must have Case \rom{4}. If the allocation is $\alpha$-WEFX, the problem is solved. Otherwise, the following inequality bound must hold: 

\begin{lemma} \label{lemma: ineq4}
If Case \rom{4} occurs but the allocation is not $\alpha$-WEFX, we have: 
\begin{equation} \label{case4ineq}
    \frac{w/k}{1 - w} < \alpha \left(\frac{1 - w/k}{w}\right)
\end{equation}
\end{lemma}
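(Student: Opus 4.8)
The plan is to run exactly the argument of Lemma~\ref{lemma: ineq1} (the Case~\rom{1} analysis), but with the two agents' roles exchanged: in Case~\rom{4} it is agent~$2$ who holds the singleton bundle, so agent~$2$ now plays the part that agent~$1$ played in Case~\rom{1}.

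First I would isolate the only envy relation that can fail. In Case~\rom{4} the bundle $\A_2$ consists of a single good — agent~$2$'s most valuable item among those the construction would otherwise assign to agent~$1$ — so $\A_2\setminus\{a\}=\emptyset$ for the unique $a\in\A_2$. Hence agent~$1$ cannot weighted-envy agent~$2$ even after discarding a good, and agent~$1$'s $\alpha$-WEFX condition holds automatically. Consequently, if the Case~\rom{4} allocation is not $\alpha$-WEFX, the violation must be that agent~$2$ weighted-envies agent~$1$:
\[
	\frac{v_2(\A_2)}{1-w} \;<\; \alpha\cdot\max_{a\in\A_1}\frac{v_2(\A_1\setminus\{a\})}{w}.
\]

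The crux is the lower bound $v_2(\A_2)\ge w/k$, the mirror of the bound $v_1(g_1)>w/k$ used in Lemma~\ref{lemma: ineq1}. I expect it to follow from the defining property of $k$ — the prefix $g_1,\dots,g_k$ of the descending-sorted $\A_1$ has $v_1$-value strictly exceeding $w$, so its heaviest element alone has value exceeding $w/k$ — together with the fact that in Case~\rom{4} agent~$2$ receives \emph{her own} favorite item from the relevant candidate set, so she fares at least as well as this averaging estimate (one may also invoke that Cases~\rom{1}--\rom{3} have already failed). Granting $v_2(\A_2)\ge w/k$, normalization gives $v_2(\A_1)=1-v_2(\A_2)\le 1-w/k$, hence $v_2(\A_1\setminus\{a\})\le v_2(\A_1)\le 1-w/k$ for every $a\in\A_1$.

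Finally I would chain these estimates with the displayed envy inequality: decreasing its left-hand side via $v_2(\A_2)\ge w/k$ and increasing its right-hand side via $\max_a v_2(\A_1\setminus\{a\})\le 1-w/k$ preserves the strict inequality, giving
\[
	\frac{w/k}{1-w} \;<\; \alpha\left(\frac{1-w/k}{w}\right),
\]
which is precisely~(\ref{case4ineq}). The main obstacle is the middle step: rigorously justifying $v_2(\A_2)\ge w/k$ from the greedy construction and the definition of $k$ (and, if needed, from the prior failures); the surrounding manipulations are routine. Combined with Lemmas~\ref{lemma: ineq1}--\ref{lemma: ineq3}, this inequality feeds the concluding step, which shows (\ref{case1ineq})--(\ref{case4ineq}) cannot hold simultaneously once $\alpha=\frac{w}{2\sqrt[3]{m}}$.
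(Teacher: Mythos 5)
Your framing is right on two points: in Case \rom{4} agent~$1$ trivially satisfies the $\alpha$-WEFX condition toward the singleton bundle, so the only possible violation is agent~$2$ envying agent~$1$; and once you have $v_2(\A_2)\ge w/k$ (hence $v_2(\A_1)\le 1-w/k$ by normalization), chaining into the envy inequality gives (\ref{case4ineq}). But the route you propose for the crucial bound $v_2(\A_2)\ge w/k$ does not work, and you flag it yourself as the obstacle without resolving it. The definition of $k$ is a statement about $v_1$: it gives $v_1(\{g_1,\dots,g_k\})\ge w$ and hence $v_1(g_1)> w/k$. This tells you nothing useful about $v_2$ on these items, because by the construction of the bundle $\A_1$ in line~4 of Algorithm~\ref{alg:approx} every item $g$ in it satisfies $v_1(g)\ge v_2(g)$ — the inequality points the wrong way, so the $v_1$-based averaging estimate cannot be transferred to agent~$2$. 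Indeed, absent further information, all the $v_2$-values of $g_1,\dots,g_k$ could be far below $w/k$, and agent~$2$'s favorite item of the original $\A_1$ could be nearly worthless to her; the greedy sorting and the definition of $k$ alone cannot rule this out.

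The missing idea — which is exactly what the paper uses — is that the bound comes from the \emph{failure of Case \rom{3}}, not from the definition of $k$. In Case \rom{3} agent~$1$ holds $\{g_1,\dots,g_k\}$ and does not envy, so failure means agent~$2$ envies agent~$1$ even with the factor $\alpha<1$:
\begin{align*}
	\frac{v_2\bigl(\M\setminus\{g_1,\dots,g_k\}\bigr)}{1-w} \;<\; \alpha\,\frac{v_2\bigl(\{g_1,\dots,g_k\}\setminus\{a\}\bigr)}{w} \;\le\; \frac{v_2\bigl(\{g_1,\dots,g_k\}\bigr)}{w},
\end{align*}
and writing $x=v_2(\{g_1,\dots,g_k\})$, so the left side is $(1-x)/(1-w)$, this rearranges to $x>w$. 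Hence at least one of $g_1,\dots,g_k$ has $v_2$-value at least $w/k$, and since Case \rom{4} hands agent~$2$ her $v_2$-maximal item of the original $\A_1\supseteq\{g_1,\dots,g_k\}$, we get $v_2(\A_2)\ge w/k$. Your parenthetical ``one may also invoke that Cases \rom{1}--\rom{3} have already failed'' gestures at this, but as written your primary argument rests on the $v_1$-prefix property, which fails; the lemma's hypothesis implicitly includes the failure of Case \rom{3} (Case \rom{4} only occurs then), and that failure is indispensable here.
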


\begin{proof}
In Case \rom{4}, we allocate the item with the highest value among all the items in $\A_1$ to the second agent. Since Case \rom{3} failed, we have $v_2(\{g_1, \ldots, g_{k}\}) > w$. Hence, the value of at least one of the items in $\{g_1, \ldots, g_{k}\}$ is at least $w/k$ for the second agent. Therefore, in the Case \rom{4} allocation, $v_2(\A_2) \geq w/k$ and $v_2(\A_1) \leq 1 - w/k$.  

If $\A$ is $\alpha$-WEFX, this is the final allocation, otherwise, the second agent envies the first agent even after adding the approximation factor, and as a result, Inequality (\ref{case4ineq}) should hold. (Since we only allocate one item to the second agent, the first agent does not envy the second after removing the minimum item from $\A_2$.)  
\end{proof}

We now have the necessary tools to prove our main theorem which states that Algorithm~\ref{alg:approx} guarantees an $\alpha$-WEFX allocation. We show this by proving that Inequalities (\ref{case1ineq}-\ref{case4ineq}) cannot hold at the same time.

\begin{proof}[Proof of Theorem~\ref{thm:alg}]
If Algorithm~\ref{alg:approx} does not guarantee an $\alpha$-WEFX allocation, then all the Inequalities (\ref{case1ineq} - \ref{case4ineq}) should hold. 
After simplifying Inequality (\ref{case1ineq}), we have 
$$1 - w < \alpha (k - w)$$
Since $0 < \alpha < 1$, we necessarily have $k > 1$. Hence, we can safely use $k-1$ as a denominator in Inequality (\ref{case3ineq}).
Also after merging Inequalities (\ref{case2ineq}) and (\ref{case3ineq}), we have: 
    $$1 - w - w/(k-1) + \beta < \alpha^2 (1 - w + \beta)$$
Since we have $0 < \alpha < 1$, the following inequality should hold:
    $$1 - w - w/(k-1) < \alpha^2 (1 - w)$$
Since $k \geq 2$, we can use the above inequality and write:
\begin{equation} \label{ineqApprox1}
    1 - w - 2w/k < \alpha^2 (1 - w)
\end{equation}
After simplifying Inequality (\ref{case4ineq}), we have:
\begin{equation}\label{ineqApprox2}
    (w^2/k) (1/\alpha - 1) + w/k + w < 1
\end{equation}
We rewrite Inequality~\ref{ineqApprox1} using ~\ref{ineqApprox2} as follows:
$$(w^2/k) (1/\alpha - 1) + w/k + w - w - 2w/k < \alpha^2 (1 - w)$$
After simplifying the above inequality, we have:
$$w^2 < \alpha^3 k (1-w) + \alpha w (w+1) $$
Since $0 < w < 1$, we can rewrite the above inequality as follows:
$$w^2 < \alpha^3 k + 2 \alpha w $$
If we use $\alpha = \frac{w}{2\sqrt[3]{m}}$, we have:
$$w^2 < \frac{w^3 k}{8m} + \frac{w^2}{\sqrt[3]{m}}$$
Since $0 < w < 1$ and $2 \leq k \leq m$, we have:
$$w^2 < \frac{w^2}{8} + \frac{w^2}{\sqrt[3]{2}}$$
which is a contradiction. Hence, all the Inequalities (\ref{case1ineq} - \ref{case4ineq}) cannot hold at the same time, proving the main result.  
\end{proof}

\section{Weighted Division of Indivisible Chores}
We now turn attention to the problem of chore division. Specifically, we present symmetric upper bound results for envy-freeness up to any chore to those of Section~\ref{sec:imposs} before demonstrating that a variant on the WEF1 algorithm of \cite{Chakraborty_2020} yields weighted envy-freeness up to one chore.
\subsection{General Impossibility} \label{sec:xwef}
As is the case for the allocation of goods, we cannot in general guarantee that an allocation which is envy-free up to any chore exists for even small problem instances. By slight modification, the construction used for Theorem~\ref{thm:wefx_n2} yields the following non-existence result.

\begin{theorem}\label{thm:xwef_n2}
For $n=2$ agents with additive cost functions and weights $w_1$ and $w_2$ such that $w_1 + w_2 = 1$, there exists an instance where a complete XWEF allocation of chores does not exist while a 1.272-XWEF allocation does.
\end{theorem}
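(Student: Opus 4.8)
The plan is to recycle the good-allocation counterexample behind Theorem~\ref{thm:wefx_n2} by reinterpreting the valuation tables as cost tables. Concretely, I would take the same four items $a_1,\dots,a_4$ with the same numbers, now read as costs: $c_1 = (0,1,\varphi,\varphi)$ and $c_2 = (0,0,1,1)$, with $\varphi = (1+\sqrt5)/2$, and weights $w_1=\alpha$, $w_2=1-\alpha$. The first step is to observe that there is a natural ``complementation'' duality between WEFX for goods and XWEF for chores: an agent with costs equal to a good-valuation profile is, roughly, the same as an agent who values the \emph{complement} bundle, so the case analysis over the $2^4$ partitions carries over with the roles of $\A_i$ and $\A_j$ swapped. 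I would make this precise just enough to transfer the nonexistence claim: for every allocation $(\B_1,\B_2)$, translate the XWEF condition $\max_{b\in\B_i} c_i(\B_i\setminus\{b\})/w_i \le c_i(\B_j)/w_j$ into an inequality purely in the four numbers and $\alpha$, exactly paralleling the WEFX derivation, and then show that no choice of $\alpha$ makes all the required inequalities simultaneously satisfiable — this is the nonexistence half, and it is essentially the same adversarial $\alpha$-selection argument already done for goods.

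Next I would handle the positive half: exhibiting a $1.272$-XWEF allocation. The number $1.272 \approx \sqrt{\varphi} = 1/\sqrt{\,\varphi-1\,}$ strongly suggests that the extremal allocation for chores is the ``reciprocal'' of the extremal good allocation: if the good instance is tight at approximation factor $0.786 \approx 1/\varphi^{?}$ — note $1/1.272 \approx 0.786$, so $1.272 \approx 1/0.786$ — then the chore bound is literally the reciprocal of the good bound, which is exactly what one expects from the $\alpha$-WEFX vs.\ $\beta$-XWEF definitions (one relaxes by multiplying the envied side, the other by multiplying the envious side). So the concrete step is: pick the allocation that, under the good-interpretation, achieved the $0.786$ factor, verify it is still a valid partition in the chore setting, recompute the worst envy ratio $\max_{b\in\B_i} c_i(\B_i\setminus\{b\})/w_i \big/ \big(c_i(\B_j)/w_j\big)$ over both ordered pairs $(i,j)$, optimize over $\alpha$, and check the maximum equals $1.272$ (up to rounding).

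The main obstacle, and the part requiring genuine care rather than mechanical transcription, is that the good/chore duality is \emph{not} a perfect bijection: the ``remove one item'' slack sits on opposite bundles in WEFX versus XWEF, so the set of candidate worst-case allocations and the direction of each inequality must be re-derived rather than assumed, and the item $a_1$ with zero cost for both agents behaves differently (it is freely disposable in a way a zero-value good is not, since adding it to a bundle changes nothing but it can still be used to ``absorb'' the removed-chore slack). I would therefore not claim a black-box reduction; instead I would redo the finite case analysis, using the goods proof only as a template for organizing the $16$ partitions into symmetry classes, and flag explicitly where the chore inequalities differ in sign or in which bundle the $\max_b$ is taken. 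Everything after that is the same bounded computation — finitely many allocations, each giving a rational-plus-$\varphi$ expression in $\alpha$, then a one-variable optimization — so once the inequalities are correctly set up, closing the argument is routine, and the full details would go to the appendix as with Theorem~\ref{thm:wefx_n2}.
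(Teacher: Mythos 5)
Your proposal matches the paper's proof in essence: the paper uses the same four-item instance with the identical cost numbers (merely swapping which agent carries the $(0,1,\varphi,\varphi)$ profile, which is immaterial since the weight parameter $\alpha$ is optimized at the end), redoes the exhaustive case analysis over all allocations with the XWEF inequalities, and selects $\alpha$ to obtain the factor $1.272$, which it likewise observes is the reciprocal of the $0.786$ bound for goods. Your caution that the removal slack sits on the opposite bundle and that the reduction is not black-box is exactly the adjustment the paper's case-by-case verification carries out.
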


Note that the approximation factor here of 1.272 is equivalent to the reciprocal of the approximation factor for WEFX, as the two problems are nearly symmetric to one another. Lastly, this counterexample construction can be extended to the $n=3$ setting:

\begin{theorem}\label{thm:xwef_n3}
For $n=3$ agents with additive cost functions and weights $w_1,w_2$ and $w_3$ such that $w_1 + w_2 + w_3 = 1$, there exists an instance where a complete XWEF allocation of chores does not exist while a 1.214-XWEF allocation does.
\end{theorem}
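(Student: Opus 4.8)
The plan is to mirror the strategy used for Theorem~\ref{thm:wefx_n3} (the $n=3$ goods impossibility) and transfer it to the chore setting via the near-duality observed after Theorem~\ref{thm:xwef_n2}. Recall that in the goods case the counterexample for $n=2$ used four goods with values built around the golden ratio $\varphi = (1+\sqrt5)/2$, and the $n=3$ case extended this by adding a third agent and extra ``dummy'' items so that every complete allocation forces some agent into an envious relationship that cannot be repaired by deleting a single good from the envied bundle. For chores I would take the same underlying instance but reinterpret the entries as costs: an agent $i$ with bundle $\B_i$ and weight $w_i$ now suffers XWEF-violation toward $j$ precisely when $\max_{b \in \B_i} c_i(\B_i \setminus \{b\})/w_i > c_i(\B_j)/w_j$. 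The first step is therefore to write down the explicit cost table (3 agents, some small number of chores, entries in $\{0,1,\varphi,\dots\}$) and the weight parametrization $w_1 = \alpha_1, w_2 = \alpha_2, w_3 = 1-\alpha_1-\alpha_2$.

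The second step is the exhaustive case analysis over all ways to partition the chore set into three bundles. For each partition I would identify, for each ordered pair $(i,j)$, the quantity $\rho_{ij} := \big(\max_{b\in\B_i} c_i(\B_i\setminus\{b\})/w_i\big)\big/\big(c_i(\B_j)/w_j\big)$, which is the factor by which that pair fails XWEF; the allocation is $\beta$-XWEF iff $\max_{i,j}\rho_{ij}\le\beta$, so the best achievable factor for that partition is $\max_{i,j}\rho_{ij}$ minimized over the weight choice. The key computation is then to show: (a) for every partition and every legal weight vector, $\max_{i,j}\rho_{ij} > 1$ — establishing non-existence of an exact XWEF allocation — and (b) there is a particular adversarial weight vector under which \emph{every} partition has $\max_{i,j}\rho_{ij} \ge 1.214$, while (c) exhibiting one concrete partition-plus-weight pair achieving exactly $1.214$ (or matching the stated bound), so the bound is tight. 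Because the number of partitions of a small item set into $\le 3$ labeled bundles is finite, this reduces to a bounded — if tedious — collection of rational/algebraic inequalities in $\alpha_1,\alpha_2$, and the value $1.214$ should emerge as the solution of the min-max optimization, analogous to how $0.795$ arose in Theorem~\ref{thm:wefx_n3}.

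The third step is to confirm the quantitative relationship to the goods case. For $n=2$ the chore bound $1.272 \approx 1/0.786 = 1/\varphi^{?}$ is exactly the reciprocal of the goods bound because with two agents the XWEF condition for $(i,j)$ is the reciprocal of the WEFX condition for $(j,i)$ on the transposed instance; for $n=3$ this exact reciprocity breaks (hence $1.214 \ne 1/0.795 \approx 1.258$), because with three agents one cannot simultaneously transpose all pairwise comparisons. I would make this discrepancy precise, explaining why the third agent introduces asymmetry, and then carry out the min-max optimization honestly rather than by appeal to duality.

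The main obstacle I anticipate is step two: the case analysis is genuinely large (each of the finitely many labeled partitions must be checked, and within each the ``remove the max-cost chore'' operation interacts with the weight normalization in a way that makes $\rho_{ij}$ a piecewise-rational function of $\alpha_1,\alpha_2$), and pinning down the exact extremal weight vector that forces the clean constant $1.214$ requires solving the resulting small optimization carefully — this is where an off-by-a-case error would silently weaken or break the bound. I would therefore organize the argument by first eliminating partitions that are dominated (clearly worse than others by symmetry or by giving one agent an empty bundle), reducing to a handful of ``critical'' partitions, and only then optimizing the weights over that short list. The remainder — verifying the tightness witness and the non-existence claim — is then routine, and as in the companion theorems the full computation can be deferred to the appendix.
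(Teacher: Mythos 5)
Your overall strategy---exhibit a small instance with costs in $\{0,1,\varphi\}$ and adversarially chosen weights, then exhaustively check all labeled partitions to show every allocation fails exact XWEF while the best achievable factor is $1.214$---is exactly the methodology the paper uses; its appendix proof simply states the instance (five chores with cost rows $(1,1,\varphi,\varphi,1)$, $(1,0,0,1,1)$, $(\varphi,1,0,\varphi,\varphi)$, weights $(1/15,6/15,8/15)$) and the witness allocation $\B_1=\{b_1\}$, $\B_2=\{b_5\}$, $\B_3=\{b_2,b_3,b_4\}$. The problem is that your proposal never actually produces an instance, and for an existence-of-counterexample theorem the instance \emph{is} the mathematical content; everything else is finite verification. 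The one concrete construction you offer---reuse the goods $n=3$ instance from Theorem~\ref{thm:wefx_n3} with entries reinterpreted as costs---is not what the paper does and is unsupported: the paper's chore instance is a different cost matrix with different weights ($(1/15,6/15,8/15)$ rather than $(3/19,7/19,9/19)$), and your own observation that $1.214 \neq 1/0.795$ already signals that a mechanical transposition of the goods construction cannot be expected to deliver the stated bound (that reciprocity argument only works for $n=2$). Without either an explicit table whose case analysis has been carried out, or a verified reduction from the goods instance, the key step is missing.

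A secondary issue: your step (a) asks to show $\max_{i,j}\rho_{ij}>1$ for every partition and \emph{every} legal weight vector. That is stronger than needed---the theorem only requires one instance, weights included---and is almost certainly false for the relevant instances, since for many weight vectors an exact XWEF allocation will exist. What is actually required is to fix the adversarial weights once, show every partition then has $\max_{i,j}\rho_{ij}>1$ (non-existence), and exhibit one partition attaining the factor $1.214$ (the approximation witness); your items (b) and (c) capture this, so (a) should be dropped or restated. With an explicit instance filled in and the finite case check honestly executed, the remainder of your plan would track the paper's argument.
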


\subsection{1WEF Algorithm}\label{sec:1wef}
We lastly present our polynomial time algorithm for weighted envy-freeness up to one chore (1WEF) to compliment the WEF1 algorithm of \cite{Chakraborty_2020}. To achieve this, we must devise a proper sequential picking protocol that allows agents to pick their most preferred chores in a predetermined ordering in a similar vein to the traditional \emph{round-robin} procedure from the symmetric agent literature \cite{Budish_2010}. 

In our general case with unequal weights, we devise a \emph{weight-dependent} picking sequence, 
in addition to an arbitrary ordering final loop, to yield the desired 1WEF allocation for any number of agents and arbitrary weights. Although the proof of our algorithm is intricate and precise, the algorithm itself is intuitive and computationally efficient.
\begin{theorem}\label{thm:1wef}
For any number of agents with additive cost functions and arbitrary positive real weights, there exists an algorithm that computes a 1WEF complete allocation in polynomial time.
\end{theorem}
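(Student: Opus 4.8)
The plan is to lift the weighted picking-sequence machinery that \cite{Chakraborty_2020} developed for WEF1 allocations of goods to the chores setting, where the relaxation now deletes a chore from the \emph{complaining} agent's own bundle rather than from the envied one. The algorithm has two stages. In the main stage we run a \emph{weight-dependent picking sequence}: repeatedly, the next agent to act is the one minimizing $(n_i+1)/w_i$, where $n_i$ is the number of chores already assigned to agent $i$, and that agent takes a chore of minimum cost (to herself) among those not yet assigned; equivalently, the $t$-th chore handed out goes to the owner of the $t$-th smallest element of the multiset $\{\,\ell/w_i : i\in\N,\ \ell\ge 1\,\}$. The main stage stops once exactly $n$ chores remain, after which a single \emph{final loop} lets the agents, in an arbitrary but fixed order, each take one more chore (again the cheapest available to them). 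Every step is a greedy selection over a shrinking pool, so the running time is polynomial, $O(m\log(mn))$.

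I would then record three structural facts and combine them into one charging argument per ordered pair $(i,j)$. (a) Since each agent always grabs a minimum-cost remaining chore while the pool only shrinks, the chores an agent receives are non-decreasing in her own cost, so the ``expensive'' chore to delete for the 1WEF test is just her last one. (b) The main-stage rule keeps weighted counts balanced: after any prefix, $n_i/w_i$ and $n_j/w_j$ differ by less than $\max(1/w_i,1/w_j)$, and, more usefully, at the moment $i$ takes her $\ell$-th chore of that stage, $j$ has taken strictly fewer than $\ell w_j/w_i + 1$. (c) If a chore $x$ is assigned to $i$ strictly before a chore $y$ is assigned to $j$, then $c_i(x)\le c_i(y)$, because $y$ was still available to $i$ at that earlier time. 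Writing $b^i_1,b^i_2,\dots$ and $b^j_1,b^j_2,\dots$ for the chores of $i$ and $j$ in chronological order, I charge $b^i_\ell$ to $b^j_{k(\ell)}$ for the $k(\ell)$ obtained by comparing the virtual times $\ell/w_i$ and $k/w_j$ so that $b^i_\ell$ was acquired no later than $b^j_{k(\ell)}$; fact (c) then makes every charge cost-nonincreasing for $i$, while fact (b) bounds the number of $i$-chores charged to a single chore of $j$ by essentially $w_i/w_j$ on average, the cumulative fractional over-count being absorbed by deleting $i$'s single last (hence, by (a), costliest) chore $b^\ast$. Summing the charges yields $c_i(\B_i\setminus\{b^\ast\})/w_i \le c_i(\B_j)/w_j$, which is precisely the 1WEF inequality.

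I expect the endgame to be the main obstacle, and it is exactly what the final loop is there to handle. Without it, a high-weight agent could monopolize the tail of the picking sequence, taking several of the last chores in a row, and would then need to delete more than one chore to satisfy 1WEF --- the weighted counts of fact (b) only control the interior of the run. The final loop forces that, when any agent takes her chore there, at most $n-1$ chores remain, which is exactly the slack needed to close the charging argument for the boundary pairs. Reconciling this ``delete exactly one chore'' budget with the fractional weight-count slack, uniformly over all pairs $(i,j)$ and both regimes $w_i\ge w_j$ and $w_i<w_j$, is where the bookkeeping is delicate; the interior of the sequence otherwise follows the goods analysis of \cite{Chakraborty_2020} almost verbatim, with minima replacing maxima and the roles of $\B_i$ and $\B_j$ interchanged.
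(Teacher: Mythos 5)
Your proposal is correct and follows essentially the same route as the paper: the algorithm is identical (the $\arg\min_i t_i/w_i$ picking sequence with greedy minimum-cost picks, stopping when $n$ chores remain, then a final arbitrary-order loop), and your structural facts and charging scheme correspond to the paper's Lemma~\ref{thm:1wef_lem1} (weighted pick-count balance) and the inductive charging of Lemma~\ref{thm:1wef_lem2}, with the rounding slack absorbed by the single last-allocated (costliest) chore, exactly as in the paper's adaptation of \cite{Chakraborty_2020} to chores.
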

The crux of this theorem relies on the carefully constructed picking-sequence of Algorithm~\ref{alg:1wef} so that each agent receives at least one item of higher cost (in the final loop) and the remaining items are allocated based on a weight-adjusted picking frequency for each agent. We claim that every agent is 1WEF up to the chore selected in the final for loop at every iteration of the while loop. We begin by presenting two insightful lemmas concerning the algorithm itself, their proofs are deferred to the appendix due to space constraints. 

\begin{algorithm}[tb]
\caption{Chore-Division Round-Robin}
\label{alg:1wef}
\begin{algorithmic}[1] 
\STATE \textbf{Input}: $\mathcal{L}, (w_1,w_2,...,w_n), c_i \text{ for each } i \in \mathcal{N}$\\
\STATE \textbf{Output}: returns 1WEF allocation for the $n$ agents.
\STATE Remaining chores $\widehat{\mathcal{L}} \leftarrow \mathcal{L}$
\STATE Bundles $B \leftarrow \emptyset, \forall i \in \N$
\STATE $t_i \leftarrow 1, \forall i \in \N$
\STATE \textit{Allocate chores:}
\WHILE {$\widehat{|\mathcal{L}}| > |\N|$}
	\STATE $i^{*} \leftarrow \arg\min_{i \in \N} \frac{t_i}{w_i}$ , break ties lexographically
	\STATE $l^{*} \leftarrow \arg\min_{l \in \widehat{\mathcal{L}}} c_{i^{*}}(l)$
	\STATE $B_{i^{*}} \leftarrow B_{i^{*}} \cup \{l^{*}\}$
	\STATE $\widehat{\mathcal{L}} \leftarrow \widehat{\mathcal{L}} \setminus l^*$
	\STATE $t_{i^{*}} \leftarrow t_{i^{*}} + 1$
\ENDWHILE
\STATE \textit{Allocate Remaining Chores:}
\FOR {$i \in \N$} 
	\STATE $l^{*} \leftarrow \arg\min_{l \in \widehat{\mathcal{L}}} c_i(l)$
	\STATE $B_i \leftarrow B_i \cup \{l^{*}\}$
	\STATE $\widehat{\mathcal{L}} \leftarrow \widehat{\mathcal{L}} \setminus l^*$
\ENDFOR
\end{algorithmic}
\end{algorithm}

\begin{lemma}\label{thm:1wef_lem1}
Consider an agent $i$ chosen by Algorithm~\ref{alg:1wef} to pick a chore at some iteration t, and suppose it is not their first pick. Let $t_i$ and $t_j$ denote the number of times agents $i$ and $j$ picked a chore respectively prior to the current iteration. Then $\frac{t_j}{t_i} \geq \frac{w_j}{w_i}$.
\end{lemma}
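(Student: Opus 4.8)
The plan is to prove Lemma~\ref{thm:1wef_lem1} by tracking the invariant that the algorithm maintains on the quantities $t_i/w_i$ throughout the first (while-loop) phase. The key observation is that the selection rule $i^{*} \leftarrow \arg\min_{i} t_i/w_i$ is exactly a scheduling rule that keeps the ratios $t_i/w_i$ as balanced as possible: whenever agent $i$ is chosen at iteration $t$, it must be because $t_i/w_i \le t_k/w_k$ for \emph{every} other agent $k$, where here $t_i, t_k$ denote the pick-counts \emph{at the start} of that iteration (i.e., the values compared in line~8). In particular this holds for the specific agent $j$, giving $t_i/w_i \le t_j/w_j$, which rearranges to $t_j/t_i \ge w_j/w_i$ provided $t_i > 0$ — and $t_i \ge 1$ here precisely because we are told this is not agent $i$'s first pick, so $i$ has picked at least once before and hence $t_i \ge 1$ at the current iteration.

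Concretely, I would first fix notation: let the current iteration be $t$, and let $t_i, t_j$ be the values of the counters at the moment line~8 is executed at iteration $t$ (so $t_i$ counts $i$'s picks strictly before iteration $t$; since $t_i$ is initialized to $1$ in line~5, one must be careful about the off-by-one, but the statement as phrased says $t_i$ is ``the number of times agents $i$ \ldots picked a chore \ldots prior to the current iteration,'' so I will align the proof with that reading and note that the $\arg\min$ in line~8 compares these same quantities up to the common initialization offset, which cancels in the comparison). Then I would invoke the definition of $i^{*} = i$ at this iteration: by the $\arg\min$ rule, $\frac{t_i}{w_i} \le \frac{t_j}{w_j}$ for the chosen $i$ and every competitor $j$ (ties broken lexicographically do not affect the non-strict inequality). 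Multiplying both sides by $w_i w_j > 0$ and dividing by $t_i > 0$ yields $\frac{t_j}{t_i} \ge \frac{w_j}{w_i}$, as desired.

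The one genuine subtlety — and the step I expect to require the most care — is reconciling the counter initialization ($t_i \leftarrow 1$ in line~5) and increment (line~12) with the lemma's informal phrasing of ``$t_i$ = number of times $i$ picked prior to the current iteration.'' Under the code's convention, the quantity being minimized at an iteration where $i$ is on its $(r+1)$-st pick is $(r+1)/w_i$ (including the initial offset of $1$), whereas ``number of prior picks'' would be $r$. Since the offset is the additive constant $1$ for \emph{every} agent, the inequality $\frac{1+r_i}{w_i} \le \frac{1+r_j}{w_j}$ established by the $\arg\min$ does \emph{not} directly give $\frac{r_i}{w_i} \le \frac{r_j}{w_j}$ in general; however, the hypothesis that this is \emph{not} agent $i$'s first pick guarantees $r_i \ge 1$, and one checks that $\frac{1+r_i}{w_i} \le \frac{1+r_j}{w_j}$ together with $r_i \ge 1$ is still enough to conclude $\frac{r_j}{r_i} \ge \frac{w_j}{w_i}$ — indeed this is the cleanest interpretation in which the lemma is true and is presumably why the ``not their first pick'' clause appears. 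So the proof reduces to: (i) extract the $\arg\min$ inequality on the offset counters, (ii) use $r_i \ge 1$ (the not-first-pick hypothesis) to pass from offset counters to true pick-counts, and (iii) rearrange. I would present (ii) as a short one-line algebraic fact: if $a \ge 1$, $b \ge 0$, $p,q>0$, and $\frac{1+a}{p} \le \frac{1+b}{q}$, then $\frac{b}{a} \ge \frac{q}{p}$, whose verification is immediate from $q(1+a) \le p(1+b)$ and $q \le p \cdot \frac{1+b}{1+a} \le p\cdot\frac{b}{a}$ when $a\ge 1$ (using $\frac{1+b}{1+a}\le \frac{b}{a}$ iff $a\le b$... ) — here I would double-check the monotonicity direction carefully, since that is exactly the place an error could hide, and if the naive version fails I would instead argue directly from the fact that between $i$'s previous pick and the current one, every agent $j$ could have picked enough times to keep its ratio above $i$'s, yielding the bound on the true counts.
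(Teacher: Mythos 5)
Your core argument --- agent $i$ is chosen in line~8 only if $\frac{t_i}{w_i} \le \frac{t_j}{w_j}$ for every other agent $j$, and rearranging with $t_i>0$ gives $\frac{t_j}{t_i} \ge \frac{w_j}{w_i}$ --- is exactly the paper's (one-line) proof, which reads $t_i,t_j$ as the counter variables actually compared in line~8. The problem is the bridge you build in step~(ii) to the literal ``number of prior picks'' reading: the algebraic fact you assert is false. From $\frac{1+a}{p} \le \frac{1+b}{q}$ with $a\ge 1$, $b\ge 0$, $p,q>0$ one cannot conclude $\frac{b}{a}\ge\frac{q}{p}$; take $a=1$, $b=0$, $p=10$, $q=1$. (The inequality $\frac{1+b}{1+a}\le\frac{b}{a}$ that your chain needs holds only when $a\le b$, exactly as your own parenthetical suspects.) Worse, this is not a repairable technicality: under the strict prior-picks reading the lemma itself fails, because the ``not their first pick'' hypothesis constrains $t_i$, not $t_j$. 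With two agents of weights $w_i=10$ and $w_j=1$, the algorithm lets $i$ make its second pick while $j$ has picked nothing (counters $2/10$ vs.\ $1/1$), so the prior-pick counts give $\frac{t_j}{t_i}=0<\frac{w_j}{w_i}$. Consequently your fallback plan --- arguing that between $i$'s consecutive picks every $j$ keeps its ratio above $i$'s --- cannot succeed either; no argument proves a false statement.

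The correct resolution is simply to interpret $t_i$ and $t_j$ as the algorithm's counters (initialized to $1$, i.e., one plus the number of completed picks), which is the only reading under which the claim holds and is how the paper's proof proceeds; under that reading your paragraph-two argument is already complete and needs no offset correction (indeed the ``not first pick'' clause is not even used there, since counters are always at least $1$). So the gap is not in your main idea, which matches the paper, but in asserting an incorrect algebraic bridge between the two counting conventions, and in leaving its failure unresolved rather than recognizing that only the counter convention makes the statement true.
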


Lemma~\ref{thm:1wef_lem1} is sufficient for ensuring that agent $i$ remains weighted envy-free up to the final chore allocated at each iteration of the loops execution. We verify this guarantee this fact using the following:
\begin{lemma}\label{thm:1wef_lem2}
Suppose that, for every iteration $t$ in which agent $i$ picks an item prior to their final pick, the number of times that agents $i$ and $j$ have picked chores ($t_i$ and $t_j$ respectively) satisfy $\frac{t_j}{t_i} \geq \frac{w_j}{w_i}$. Then, in every partial allocation (plus the final chore of the for loop) up to and including $i$'s latest pick, agent j is weighted envy-free up to the item allocated in the final loop of Algorithm~\ref{alg:1wef}.
\end{lemma}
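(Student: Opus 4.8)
The plan is to reduce the lemma to a single combinatorial inequality and then establish it by a fractional ``matching of picks.'' It suffices to prove that in the terminal allocation $w_i\,c_j\big(B_j\setminus\{f_j\}\big)\le w_j\,c_j(B_i)$, where $f_j$ denotes the chore $j$ receives in the concluding for-loop: dividing by $w_iw_j$ this is exactly the statement that $j$ does not weighted-envy $i$ once its for-loop chore is removed, and the ``partial allocation up to $i$'s latest pick'' form follows by replaying the argument below on the corresponding prefix of the picking sequence (with the for-loop chores appended). I would list the while-loop chores acquired by $j$ as $x_1,\dots,x_{t_j}$ in the order picked and those acquired by $i$ as $y_1,\dots,y_{t_i}$, and append $f_i$ (the chore $i$ gets in the for-loop) as a final entry $y_{t_i+1}$; then $B_j\setminus\{f_j\}=\{x_1,\dots,x_{t_j}\}$ and $c_j(B_i)=\sum_{p=1}^{t_i+1}c_j(y_p)$, so the target becomes $w_i\sum_q c_j(x_q)\le w_j\sum_p c_j(y_p)$.

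Two facts would drive the argument. First, a greedy observation: whenever any agent receives a chore (in either loop) it is a $c$-minimizer over the chores still unassigned at that step; hence if $j$ receives $x_q$ then every chore assigned afterwards -- to anyone -- has $c_j$-value at least $c_j(x_q)$, and in particular $c_j(f_i)\ge c_j(x_q)$ for all $q$, since $f_i$ is handed out in the for-loop after all while-loop picks. Second, an ordering estimate from the hypothesis: $j$'s $q$-th while-loop pick precedes $i$'s $p$-th while-loop pick whenever $(p-1)w_j>(q-1)w_i$. Indeed, if not, then immediately before $i$'s $p$-th pick agent $j$ has picked at most $q-1$ times while $i$ has picked $p-1\ge 1$ times, so the ratio of their pick counts is at most $(q-1)/(p-1)$; but the hypothesis at that non-first pick of $i$ forces this ratio to be at least $w_j/w_i$, i.e.\ $(q-1)w_i\ge(p-1)w_j$, a contradiction. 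Combining the two facts, $c_j(x_q)\le c_j(y_p)$ for every $p\le t_i$ with $(p-1)w_j>(q-1)w_i$, and $c_j(x_q)\le c_j(y_{t_i+1})=c_j(f_i)$ unconditionally.

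With these comparisons in hand I would finish via a fractional transportation argument: treat each $x_q$ as a source of demand $w_i$ and each $y_p$ ($1\le p\le t_i+1$) as a sink of capacity $w_j$, and allow $x_q$ to route flow to $y_p$ only along a pair for which $c_j(x_q)\le c_j(y_p)$ has been certified above (in particular $x_q$ may always route to the sink $f_i$). Any feasible fractional flow $\{f_{q,p}\}$ -- one with $\sum_p f_{q,p}=w_i$ for each source and $\sum_q f_{q,p}\le w_j$ for each sink -- then yields $w_i\sum_q c_j(x_q)=\sum_q\big(\sum_p f_{q,p}\big)c_j(x_q)\le\sum_{q,p}f_{q,p}\,c_j(y_p)\le w_j\sum_p c_j(y_p)$, which is precisely the inequality we want. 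So the whole proof hinges on exhibiting such a flow.

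The step I expect to be the real obstacle is exactly this feasibility check -- the Hall/Gale condition that for every ``late'' block of sinks $\{y_r,\dots,y_{t_i},f_i\}$ the total demand of the sources that can only reach that block does not exceed its capacity $(t_i-r+2)w_j$. By the ordering estimate the sources confined to such a block are those $x_q$ with $(q-1)w_i$ large enough, so this comes down to a careful, exact comparison between $t_j$ and a ceiling-rounded multiple of $(t_i+1)\,w_j/w_i$; one must use that the picking rule keeps $t_j$ and $t_i$ in the tight balance guaranteed by Lemma~\ref{thm:1wef_lem1} (applied in both orderings of the pair) and, crucially, that the extra sink $f_i$ supplies exactly the unit of slack that makes the boundary block feasible -- which is the ``up to one chore'' of the conclusion, and the reason the final for-loop is there at all. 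Once the Hall condition is verified the flow exists and the lemma follows; and since every step used only the stated ratio hypothesis together with the greedy property, the identical argument applied to a prefix of the while loop gives the intermediate-snapshot statement verbatim.
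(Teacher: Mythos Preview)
Your target inequality $w_i\,c_j(B_j\setminus\{f_j\})\le w_j\,c_j(B_i)$ asserts that \emph{agent $j$} is 1WEF towards $i$, which is what the lemma literally says. However, the paper's own proof (and its use in Theorem~\ref{thm:1wef}, where ``they will remain 1WEF'' refers to agent $i$) establishes the reverse direction: agent $i$ is 1WEF towards $j$, i.e.\ $c_i(B_i\setminus\{f_i\})/w_i\le c_i(B_j)/w_j$. The ``$j$'' in the lemma statement is evidently a typo for ``$i$''. The two directions are \emph{not} interchangeable under the stated hypothesis, which constrains the pick-count ratio only at $i$'s turns.

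For the direction you attempt, the Hall condition you flag as ``the real obstacle'' is in fact unattainable from the hypothesis alone. The assumption $t_j/t_i\ge w_j/w_i$ gives only a \emph{lower} bound on $t_j$ relative to $t_i$; nothing prevents $j$ from picking arbitrarily often early on. Take equal weights and a while-loop sequence $j,j,\dots,j,i,i$ with a hundred leading $j$-picks: at $i$'s only non-first pick $t_j=100\ge 1=t_i$, so the hypothesis holds, yet your total source demand is $t_j w_i=100w$ while the usable sink capacity (the sinks $y_2,\dots,y_{t_i},f_i$, since no source reaches $y_1$ via your ordering estimate) is only $t_i w_j=2w$. No feasible flow exists, and indeed with unit costs agent $j$ is not 1WEF towards $i$ here. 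Your parenthetical appeal to ``Lemma~\ref{thm:1wef_lem1} applied in both orderings of the pair'' is precisely the missing ingredient---but that is an additional hypothesis, not part of Lemma~\ref{thm:1wef_lem2}.

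The paper's argument works in the intended direction and is structurally different from a Hall/flow feasibility check: it groups $i$'s picks by which consecutive $j$-picks they fall between (the $\tau_x$'s), then proves by induction on $r$ a running inequality carrying an explicit deficit term $(\sum_{x\le r}\tau_x - r\phi)\cdot\min_{x\ge r}\beta_x$ with $\phi=w_i/w_j$, using the hypothesis to control that deficit and absorbing the slack into $\beta_{\text{last}}$ (agent $j$'s for-loop chore) already at the base case. If you want to salvage the transportation framing, you should swap source and sink roles to target $w_j\sum_p c_i(a_p)\le w_i\sum_q c_i(b_q)+w_i c_i(f_j)$; then the hypothesis gives \emph{upper} bounds on how many $a_p$'s precede each $b_q$, which is the right shape for the Hall inequalities you need.
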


Using Lemmas~\ref{thm:1wef_lem1} and \ref{thm:1wef_lem2}, we now have all the necessary facts to prove Theorem~\ref{thm:1wef}. Intuitively, we maintain an 1WEF invariant up to the chores that are allocated last, which will necessarily incur the most cost for each agent. This is effectively a reversal of the weighted picking sequence procedure of \cite{Chakraborty_2020} that maintains the invariant of a WEF1 allocation up the \emph{first} allocated item (of largest value to each agent). Though simplistic and intuitive, this extension is non-trivial and relies on a careful analysis of the initial weighted-picking sequence to ensure that our inductive invariant is maintained. Due to space constraints, we defer the reader to Appendix \ref{appendix:1wef} for this intricate analysis, and here present a proof of the algorithm's correctness using the two results depicted above.

\begin{proof}[Proof of Theorem~\ref{thm:1wef}]
In the instance that $n > l$, we have that only the for loop of Algorithm~\ref{alg:1wef} runs. Thus each agent is allocated \textit{at most} one chore, thus any pair of agents is 1WEF, and more so XWEF.

Now in the more interesting case of $l>n$, we invoke Lemmas~\ref{thm:1wef_lem1} and \ref{thm:1wef_lem2}. The combination of the two lemmas gives us that on any iteration when agent $i$ is picking a chore, they will remain 1WEF up to the final chore allocated in the second loop. As this property holds until a complete allocation is achieved, we have that the final result is 1WEF.

For time complexity of the algorithm, we note that the while loop runs exactly $l - n$ iterations. Within this loop, identification of the minimal $\frac{t_i}{w_i}$ takes at $O(n)$ time followed by the picking of the minimal cost chore, which takes $O(l)$ time. Since the first loop only runs in the event $l > n$, the loop runs in $O(l^2)$ time. Lastly, we have the for loop which runs in $O(n)$ iterations, each taking $O(1)$ time. Therefore, the algorithm runs in time $O(l^2)$.
\end{proof}

\section{Conclusions and Future Work}
Our work highlights the increased complexity of the fair allocation problem on both goods and chores when agents are assumed to be asymmetric. While strong negative existential results on WEFX (and XWEF) for the case of $n=2$ and 3 showcase a considerable deviation from the unweighted problem setting, our simple adaptation on the ``I-cut-you-choose'' and moving-knife procedures provide a novel first step on both exact and approximate guarantees for these challenging problems. Furthermore, the presented intricate analysis of our $\frac{w}{2\sqrt[3]{m}}$-WEFX suggests that the current methods for approximate EFX guarantees will not necessarily translate to this generalized context. For example, Proposition~\ref{prop:chakra} reveals the inadequacy of the envy-cycle procedure for general valuations. Furthermore, our efforts to ensure a 1WEF allocation highlight that the standard round-robin procedure requires specific adjustments to encompass the asymmetric agent configuration. Consequently, the amalgamation of these two approaches, though effective in yielding a $(\varphi - 1)$-EFX allocation \cite{Amanatidis_2020}, does not naturally expand to address this extended problem.


We leave the question of the existence of approximate WEFX allocations open as well as the connection between WEFX and the other studied fairness notions for the weighted context for future work. We hope the present work inspires more study on the generalized notion of weighted fair division problems as they will require a plethora of novel techniques to tackle.

\section*{Acknowledgements}
This work is partially supported by DARPA QuICC NSF AF:Small \#2218678, and NSF AF:Small
\#2114269. Max Springer was supported by the National Science Foundation Graduate Research Fellowship Program under Grant No. DGE 1840340. Any opinions, findings, and conclusions or recommendations expressed in this material are those of the author(s) and do not necessarily reflect the views of the National Science Foundation.


\bibliography{aaai24}

\begin{thebibliography}{30}
\providecommand{\natexlab}[1]{#1}

\bibitem[{Amanatidis et~al.(2021)Amanatidis, Birmpas, Filos-Ratsikas,
  Hollender, and Voudouris}]{Amanatidis_2020}
Amanatidis, G.; Birmpas, G.; Filos-Ratsikas, A.; Hollender, A.; and Voudouris,
  A.~A. 2021.
\newblock Maximum Nash welfare and other stories about EFX.
\newblock \emph{Theoretical Computer Science}, 863: 69--85.

\bibitem[{Amanatidis, Markakis, and Ntokos(2020)}]{Amanatidis_2020_2}
Amanatidis, G.; Markakis, E.; and Ntokos, A. 2020.
\newblock Multiple birds with one stone: Beating 1/2 for EFX and GMMS via envy
  cycle elimination.
\newblock \emph{Theoretical Computer Science}, 841: 94–109.

\bibitem[{Aziz, Chan, and Li(2019)}]{Aziz_2019_2}
Aziz, H.; Chan, H.; and Li, B. 2019.
\newblock Weighted maxmin fair share allocation of indivisible chores.
\newblock \emph{arXiv preprint arXiv:1906.07602}.

\bibitem[{Aziz, Moulin, and Sandomirskiy(2020)}]{aziz2020polynomial}
Aziz, H.; Moulin, H.; and Sandomirskiy, F. 2020.
\newblock A polynomial-time algorithm for computing a Pareto optimal and almost
  proportional allocation.
\newblock \emph{Operations Research Letters}, 48(5): 573--578.

\bibitem[{Babaioff, Ezra, and Feige(2021)}]{Babaioff_2021}
Babaioff, M.; Ezra, T.; and Feige, U. 2021.
\newblock Fair-share allocations for agents with arbitrary entitlements.
\newblock \emph{arXiv preprint arXiv:2103.04304}.

\bibitem[{Babaioff, Nisan, and Talgam-Cohen(2019)}]{babaioff2019fair}
Babaioff, M.; Nisan, N.; and Talgam-Cohen, I. 2019.
\newblock Fair Allocation through Competitive Equilibrium from Generic Incomes.
\newblock In \emph{FAT}, 180.

\bibitem[{Barbanel(1996)}]{barbanel1996game}
Barbanel, J. 1996.
\newblock Game-theoretic algorithms for fair and strongly fair cake division
  with entitlements.
\newblock In \emph{Colloquium Mathematicae}, volume~69, 59--73.

\bibitem[{Barman and Krishnamurthy(2020)}]{Barman_2020}
Barman, S.; and Krishnamurthy, S.~K. 2020.
\newblock Approximation algorithms for maximin fair division.
\newblock \emph{ACM Transactions on Economics and Computation (TEAC)}, 8(1):
  1--28.

\bibitem[{Bouveret, Chevaleyre, and Maudet(2016)}]{Bouveret_2016}
Bouveret, S.; Chevaleyre, Y.; and Maudet, N. 2016.
\newblock Fair Allocation of Indivisible Goods.

\bibitem[{Brams and Taylor(1995)}]{Brams_1995}
Brams, S.~J.; and Taylor, A.~D. 1995.
\newblock An Envy-Free Cake Division Protocol.
\newblock \emph{The American Mathematical Monthly}, 102(1): 9--18.

\bibitem[{Budish(2010)}]{Budish_2010}
Budish, E. 2010.
\newblock The combinatorial assignment problem: approximate competitive
  equilibrium from equal incomes.
\newblock In \emph{BQGT}.

\bibitem[{Chakraborty et~al.(2020)Chakraborty, Suksompong, Zick, and
  Igarashi}]{Chakraborty_2020}
Chakraborty, M.; Suksompong, W.; Zick, Y.; and Igarashi, A. 2020.
\newblock Weighted Envy-freeness in Indivisible Item Allocation.
\newblock \emph{ACM Transactions on Economics and Computation (TEAC)}, 9: 1 --
  39.

\bibitem[{Chaudhury, Garg, and Mehlhorn(2020)}]{chaudhury2020efx}
Chaudhury, B.~R.; Garg, J.; and Mehlhorn, K. 2020.
\newblock EFX exists for three agents.
\newblock In \emph{Proceedings of the 21st ACM Conference on Economics and
  Computation}, 1--19.

\bibitem[{Chevaleyre, Endriss, and Maudet(2017)}]{chevaleyre2017distributed}
Chevaleyre, Y.; Endriss, U.; and Maudet, N. 2017.
\newblock Distributed fair allocation of indivisible goods.
\newblock \emph{Artificial Intelligence}, 242: 1--22.

\bibitem[{Cseh and Fleiner(2020)}]{cseh2020complexity}
Cseh, {\'A}.; and Fleiner, T. 2020.
\newblock The complexity of cake cutting with unequal shares.
\newblock \emph{ACM Transactions on Algorithms (TALG)}, 16(3): 1--21.

\bibitem[{Dehghani et~al.(2018)Dehghani, Farhadi, HajiAghayi, and
  Yami}]{dehghani2018envy}
Dehghani, S.; Farhadi, A.; HajiAghayi, M.; and Yami, H. 2018.
\newblock Envy-free chore division for an arbitrary number of agents.
\newblock In \emph{Proceedings of the Twenty-Ninth Annual ACM-SIAM Symposium on
  Discrete Algorithms}, 2564--2583. SIAM.

\bibitem[{Farhadi et~al.(2019)Farhadi, Ghodsi, Hajiaghayi, Lahaie, Pennock,
  Seddighin, Seddighin, and Yami}]{Farhadi_2017}
Farhadi, A.; Ghodsi, M.; Hajiaghayi, M.~T.; Lahaie, S.; Pennock, D.; Seddighin,
  M.; Seddighin, S.; and Yami, H. 2019.
\newblock Fair Allocation of Indivisible Goods to Asymmetric Agents.
\newblock \emph{Journal of Artificial Intelligence Research}, 64: 1--20.

\bibitem[{Farhadi et~al.(2021)Farhadi, Hajiaghayi, Latifian, Seddighin, and
  Yami}]{Farhadi_2021}
Farhadi, A.; Hajiaghayi, M.; Latifian, M.; Seddighin, M.; and Yami, H. 2021.
\newblock Almost Envy-freeness, Envy-rank, and Nash Social Welfare Matchings.
\newblock In \emph{AAAI}.

\bibitem[{Feige, Sapir, and Tauber(2021)}]{Feige_2021}
Feige, U.; Sapir, A.; and Tauber, L. 2021.
\newblock A tight negative example for MMS fair allocations.
\newblock arXiv:2104.04977.

\bibitem[{Garg, McGlaughlin, and Taki(2019)}]{Garg_2019}
Garg, J.; McGlaughlin, P.~C.; and Taki, S. 2019.
\newblock Approximating Maximin Share Allocations.
\newblock In \emph{SOSA}.

\bibitem[{Lipton et~al.(2004)Lipton, Markakis, Mossel, and
  Saberi}]{Lipton_2004}
Lipton, R.; Markakis, E.; Mossel, E.; and Saberi, A. 2004.
\newblock On approximately fair allocations of indivisible goods.
\newblock In \emph{EC '04}.

\bibitem[{Markakis(2017)}]{markakis2017approximation}
Markakis, E. 2017.
\newblock Approximation algorithms and hardness results for fair division with
  indivisible goods.
\newblock \emph{Trends in Computational Social Choice}, 231--247.

\bibitem[{Peterson and Su(2002)}]{Peterson_2002}
Peterson, E.; and Su, F. 2002.
\newblock Four-Person Envy-Free Chore Division.
\newblock \emph{Mathematics Magazine}, 75: 117 -- 122.

\bibitem[{Peterson and Su(2009)}]{Peterson_2009}
Peterson, E.; and Su, F. 2009.
\newblock N-person envy-free chore division.
\newblock \emph{arXiv: Combinatorics}.

\bibitem[{Pikhurko(2000)}]{Pikhurko_2000}
Pikhurko, O. 2000.
\newblock On Envy-Free Cake Division.
\newblock \emph{The American Mathematical Monthly}, 107: 736 -- 738.

\bibitem[{Plaut and Roughgarden(2018)}]{Plaut_2018}
Plaut, B.; and Roughgarden, T. 2018.
\newblock Almost Envy-Freeness with General Valuations.
\newblock In \emph{SODA}.

\bibitem[{Procaccia(2009)}]{Procaccia_2009}
Procaccia, A.~D. 2009.
\newblock Thou Shalt Covet Thy Neighbor's Cake.
\newblock In \emph{IJCAI}.

\bibitem[{Robertson and Webb(1998)}]{Robertson_1998}
Robertson, J.; and Webb, W. 1998.
\newblock Cake-cutting algorithms - be fair if you can.

\bibitem[{Segal-Halevi and Suksompong(2020)}]{segal2020cut}
Segal-Halevi, E.; and Suksompong, W. 2020.
\newblock How to cut a cake fairly: A generalization to groups.
\newblock \emph{The American Mathematical Monthly}, 128(1): 79--83.

\bibitem[{Wu, Zhang, and Zhou(2023)}]{wu2023weighted}
Wu, X.; Zhang, C.; and Zhou, S. 2023.
\newblock Weighted EF1 Allocations for Indivisible Chores.
\newblock arXiv:2301.08090.

\end{thebibliography}

\newpage
\onecolumn
\appendix
\section{Omitted Proofs}
\subsection{Indivisible Goods}
\subsubsection{Impossibility Results}
\begin{proof}[Proof of Theorem~\ref{thm:wefx_n2}] First, we consider the simplest case where all goods are allocated to one of the agents, denote the allocation $\A_1 = \{a_1,a_2,a_3,a_4\}$:
\begin{align*}
	 \frac{v_2(\varnothing)}{1-\alpha} = 0 < \frac{2}{\alpha} = \frac{v_2(\A_1)}{\alpha}
\end{align*}
Which is by definition an envious relationship. We can also see that it is not WEFX by examining the value of the bundle after removing the good of minimal value and once again checking for envy, as this would guarantee that the removal of some item alleviates the envy if possible:
\begin{align*}
	\frac{v_2(\varnothing)}{1-\alpha} = 0 < \frac{2}{\alpha} = \max_{a \in \A_1}\frac{v_2(\A_1 \setminus \{a\})}{\alpha}
\end{align*}
It is also clear that by the additivity of the valuations, $$\max_{a \in \A_i}\frac{v_i(\A_j \setminus \{a\})}{w_j} \leq \frac{v_i(\A_j)}{w_j}.$$ Thus we need only check this condition when verifying an allocation is not WEFX. 

Converse to the above allocation, consider the case where all the goods are allocated to agent 2, denoting the allocation $\A_2 = \{a_1,a_2,a_3,a_4\}$. We once again see this yields an envious relationship which is not WEFX:
\begin{gather*}
	 \frac{v_1(\varnothing)}{\alpha} = 0 < \frac{2\phi + 1}{1-\alpha} = \max_{a \in \A_2}\frac{v_1(\A_2 \setminus \{a\})}{1-\alpha}
\end{gather*}
Now consider the case where one agent is allocated exactly one of the goods. The agent who receives the three goods will always be WEFX relative to the other, as removing the other agents single good will yield a value of zero. Thus, we need only consider the agent who receives exactly one good. We first evaluate the possibilities for agent 1:
\vspace{2.5mm} \\
\noindent (1) $\A_2 = \{a_1,a_2,a_3\} :$
$$\frac{v_1(\{a_4\})}{\alpha} = \frac{\phi}{\alpha} < \frac{\phi+1}{1-\alpha} = \max_{a \in \A_2}\frac{v_1(\A_2 \setminus \{a\})}{1-\alpha}
		\Rightarrow \frac{\phi(1-\alpha)}{\alpha(\phi+1)}\text{-WEFX}$$

\noindent (2) $\A_2 = \{a_1,a_3,a_4\} :$
$$\frac{v_1(\{a_2\})}{\alpha} = \frac{1}{\alpha} < \frac{2\phi}{1-\alpha} = \max_{a \in \A_2}\frac{v_1(\A_2 \setminus \{a\})}{1-\alpha}
		\Rightarrow \frac{1-\alpha}{2\phi\alpha}\text{-WEFX}$$

\noindent (3) $\A_2 = \{a_2,a_3,a_4\} :$
$$\frac{v_1(\{a_1\})}{\alpha} = 0
		\Rightarrow 0\text{-WEFX}$$
We omit the allocation $\A_2 = \{a_1,a_2,a_4\}$ as this yields an equivalent valuation inequality to (1). Next we consider the converse case of allocating exactly one good to the second agent.
\vspace{2.5mm} \\
\noindent (4) $\A_1 = \{a_1,a_2,a_3\} :$
$$\frac{v_2(\{a_4\})}{1-\alpha} = \frac{1}{1-\alpha} < \frac{1}{\alpha} = \max_{a \in \A_1}\frac{v_2(\A_1 \setminus \{a\})}{\alpha}
		\Rightarrow \frac{\alpha}{1-\alpha}\text{-WEFX}$$

\noindent (5) $\A_1 = \{a_1,a_3,a_4\} :$
$$\frac{v_2(\{a_2\})}{1-\alpha} = 0
		\Rightarrow 0\text{-WEFX}$$
We omit here the cases where agent 2 is allocated only $a_1$ or $a_3$ as these have equivalent valuations to the above. Lastly, we consider the case where each agent is allocated exactly two goods.
\vspace{2.5mm} \\
\noindent (6) $\A_1 = \{a_1,a_2\}, \A_2 = \{a_3,a_4\}  :$
$$\frac{v_1(\A_1)}{\alpha} = \frac{1}{\alpha} < \frac{\phi}{1-\alpha} = \max_{a \in \A_2}\frac{v_1(\A_2 \setminus \{a\})}{1-\alpha}
		\frac{1-\alpha}{\alpha \cdot \phi}\text{-WEFX}$$

\noindent (7) $\A_1 = \{a_1,a_3\}, \A_2 = \{a_2,a_4\}  :$
$$\frac{v_2(\A_2)}{1-\alpha} = \frac{1}{1-\alpha} < \frac{1}{\alpha} = \max_{a \in \A_1}\frac{v_2(\A_1 \setminus \{a\})}{\alpha}
		\frac{\alpha}{1-\alpha}\text{-WEFX}$$

\noindent (8) $\A_1 = \{a_3,a_4\}, \A_2 = \{a_1,a_2\}  :$
$$\frac{v_2(\A_2)}{1-\alpha} = 0
		\Rightarrow 0\text{-WEFX}$$
Finally, noting that $\A_1 = \{a_2,a_3\},\{a_1,a_4\}\text{, and }\{a_2,a_4\}$ all result in the same valuations and are thus omitted. We thus conclude that no WEFX allocation exists for the given scenario since all possible unique allocations to the two agents with associated valuation profiles have been examined with none meeting the defined criteria.  
\vspace{2.5mm} \\
\noindent It is also important to examine the maximal $c$-approximate WEFX allocation achieved in the above counterexample:
\begin{gather*}
	\max_{\alpha \in (0,1)} \text{ } \{\frac{\alpha}{1-\alpha},\frac{1-\alpha}{\alpha \cdot \phi},\frac{\phi(1-\alpha)}{\alpha(\phi+1)}\}
\end{gather*}
We finally obtain the maximal approximate factor by equating the three expressions and solving to get $\alpha = 0.44$. Therefore, our maximal approximate factor is 0.786-WEFX.
\end{proof}

\begin{proof}[Proof of Theorem~\ref{thm:wefx_n3}] For the given problem instance:
\begin{center}
\begin{tabular}{ c | c c c c c}
 	 & $a_1$ & $a_2$ & $a_3$ & $a_4$ & $a_5$ \\ 
 	 \hline
	 $v_1$ & 0 & 0 & 0 & 0 & 1 \\  
 	 $v_2$ & $\phi$ & $\phi$ & 1 & 0 & 0 \\
 	 $v_3$ & 1 & 1 & 0 & 0 & 1
\end{tabular}
\end{center}
with $w = (\frac{3}{19},\frac{7}{19},\frac{9}{19})$ we have that the allocation to the three agents with a highest approximate factor of 0.795 is
$$\A_1 = \{a_5\}, \A_2 = \{a_1\}, \A_3 = \{a_2,a_3,a_4\}$$
\end{proof}

\subsubsection{WEFX Procedures}
\begin{proof}[Proof of Theorem~\ref{thm:add_wefx}]
    Let $v$ denote the valuation function for all agents. By construction of the algorithm \cite{Lipton_2004}, the (incomplete) allocation at the end of each iteration is guaranteed to be WEFX as long as we can find an agent, say $i$, towards whom no other agent has weighted envy at the beginning of the iteration. This agent will pick their favorite among the remaining items which will then be the least valuable item in $i$'s bundle according to all agents due to the identical valuations (and the greedy selection by $i$). At this step in the allocation procedure, we give the item under consideration to agent $i$ and thus any resulting weighted envy towards $i$ can be eliminated by removing this item. If there is no unenvied agent, then the weighted envy graph consists of at least one cycle; however, under identical valuations, the envy graph cannot have cycles. To see this, suppose that agents $1,2,...,\ell$ form a cycle (in that order) for some $\ell \in [n]$. Since agents have identical valuations, it must be that 
    \begin{align*}
        \frac{v(\A_1)}{w_1} < \frac{v(\A_2)}{w_2} < ... < \frac{v(A_\ell)}{w_\ell} < \frac{v(\A_1)}{w_1},
    \end{align*}
    which yields a contradiction.
\end{proof}

\begin{proof}[Proof of Theorem~\ref{thm:ord_wefx}]
    We now consider the slightly relaxed model of identical ordinal preferences: there exists some ordering of the set, denoted $\sigma \in S_n$, such that $v_i(\sigma_1) \ge v_i(\sigma_2) \ge ... \ge v(\sigma_m)$ for all agents $i \in [n]$.

    Again by the envy-cycle elimination procedure, we always allocate to a source node in the envy-graph who naturally picks their favorite item among the remaining options. Since all agents have an identical ordering of the items, this is the favorite item for all agents and is moreover the least valuable within agent $i$'s bundle. Thus, any new envy incurred can be alleviated by removal of this item which implies WEFX as above.

    If there is no unenvied agent (source node in the graph), then there must be a cycle in the envy-graph. Thus, suppose that agents $1,2,...,\ell$ form a cycle (in that order) for some $\ell \in [n]$, i.e.
    \begin{align*}
        \frac{v_i(\A_i)}{w_i} < \frac{v_i(\A_{i+1})}{w_{i+1}} \text{ for $i$ under modulo $\ell$.}
    \end{align*}
    Let $i'$ be the agent of minimal weight in this cycle. By nature of the envy cycle, we must then have that
    \begin{align*}
        \frac{v_{i'}(\A_{i'})}{w_{i'}} < \frac{v_{i'}(\A_{i'+1})}{w_{i'+1}} \Rightarrow v_{i'}(\A_{i'}) < \frac{w_{i'}}{w_{i'+1}} v_{i'}(\A_{i'+1}) < v_{i'}(\A_{i'+1})
    \end{align*}
    and therefore the decycling procedure will strictly improve the allocation according to agent $i'$. We can thus repeat the process until the minimal weight agent is removed from the cycle. Since there always exists an agent of minimal weight, we can always remove an agent from the cycle until a source node exists in the envy-graph. This concludes the proof of WEFX.
\end{proof}

\begin{proof}[Proof of Theorem~\ref{thm:int_wefx}]
We first analyze the instance where $m \le w$. Naturally, since agent 1 is allocated only one item (denoted $g$), the second is WEFX (up to removal of this item). Therefore, we need only analyze the first agent's perspective. By selection $v_1(g) \ge v_1(h)$ for all $h \in \mathcal M$ and therefore $$v_1(g) \ge \frac{v_1(\mathcal M)}{m} \ge \frac{v_1(\mathcal M)}{w} \ge \frac{v_1(\mathcal M \setminus g)}{w}.$$ Thus, the allocation is WEFX.

    We now assume that $w > m$. Again, the first agent receives their favorite of the $W+1$ created bundles. Therefore, we must have that
    \begin{align*}
        v_1(\M \setminus X_1) \le W \cdot v_1(X_1)
    \end{align*}
    and dividing through by $W$ gives us that the agent is WEF (and, moreover, WEFX).

    The less-trivial proof of WEFX from the second agent's perspective relies on a simple yet crucial lemma concerning the greedy partitioning procedure.
    \begin{lemma} \label{lem:part_wefx}
        For any pair of subsets, $P_i$ and $P_j$, created in Algorithm~\ref{alg:two_agent} such that $v_2(P_i) < v_2(P_j)$, it must hold that $$v_2(P_i) \ge v_2(P_j \setminus g) \text{ for all } g \in P_j.$$
    \end{lemma}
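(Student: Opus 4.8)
The plan is to show that the partition $(P_1,\dots,P_{W+1})$ produced by the while loop of Algorithm~\ref{alg:two_agent} is in fact an EFX partition with respect to $v_2$; the displayed inequality is precisely the EFX condition between the ``envied'' bundle $P_j$ and the bundle $P_i$. First I would observe that it suffices to verify the inequality for one well-chosen $g$. Let $g^\star$ be the last item placed into $P_j$ over the course of the algorithm. Since at every iteration the loop appends a remaining item of maximum $v_2$-value (line $P_k \leftarrow P_k \cup \{g \in \arg\max_{h \in \M} v_2(h)\}$), the items are processed in non-increasing value order, so $v_2(g^\star) \le v_2(g)$ for every $g \in P_j$. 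Hence $v_2(P_j \setminus g^\star) \ge v_2(P_j \setminus g)$ for all $g \in P_j$, and it is enough to prove $v_2(P_i) \ge v_2(P_j \setminus g^\star)$.

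Next I would examine the iteration $t^\star$ at which $g^\star$ was appended to $P_j$. At that iteration the loop selected $k = j$, so $P_j$ had the minimum $v_2$-value among all $W+1$ bundles at that moment; in particular its value then was at most the value of $P_i$ at the same moment. Because $g^\star$ is the last item ever added to $P_j$, additivity gives that the value of $P_j$ just before iteration $t^\star$ equals exactly $v_2(P_j) - v_2(g^\star) = v_2(P_j \setminus g^\star)$. And because bundle values only increase as the algorithm proceeds, the value of $P_i$ at iteration $t^\star$ is at most its final value $v_2(P_i)$. Chaining these three observations yields $v_2(P_j \setminus g^\star) \le v_2(P_i)$, which together with the reduction of the first paragraph proves the lemma. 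I would also remark that the hypothesis $v_2(P_i) < v_2(P_j)$ is not strictly needed here (when $v_2(P_i) \ge v_2(P_j)$ the claim is immediate), but it is the regime relevant to the WEFX argument for agent $2$.

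The only real care needed is the bookkeeping around ties and the precise meaning of ``the iteration at which $g^\star$ is added'': one must pick $g^\star$ so that genuinely nothing is appended to $P_j$ afterwards (breaking any tie among minimum-value elements of $P_j$ by taking the one processed latest), and dispatch the degenerate case $P_j = \{g^\star\}$, where $v_2(P_j \setminus g^\star) = 0 \le v_2(P_i)$ trivially. I expect this tie/indexing bookkeeping to be the main — and only minor — obstacle; the conceptual core is just the classical fact that the least-loaded-bin greedy produces an EFX partition, here transported to the cost/valuation function $v_2$.
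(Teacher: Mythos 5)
Your proof is correct and follows essentially the same route as the paper's: both identify the last item added to $P_j$ (which, by the greedy non-increasing order, is the minimum-valued item in $P_j$), note that at that iteration $P_j$ was the least-valued bundle so $v_2(P_j\setminus g^\star)\le v_2(P_i^t)\le v_2(P_i)$, and handle the singleton case trivially. No gaps; your extra remarks on tie-breaking and the dispensability of the hypothesis $v_2(P_i)<v_2(P_j)$ are fine but not needed.
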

    We first demonstrate that this lemma gives us the final WEFX property before proceeding to prove the lemma's correctness. First observe that
    \begin{align*}
        \frac{v_2(\M \setminus X_1)}{W} &\ge \min_k v_2(P_k)
    \end{align*}
    Naturally, if $v_2(P_k) \ge v_2(X_1)$ for all such partitions given to the second agent, then we are done. Therefore, assume that $v_2(X_1) > \min_k v_2(P_k)$. By Lemma~\ref{lem:part_wefx}, this further implies that $$\min_k v_2(P_k) \ge v_2(X_1 \setminus g)$$ for any $g \in X_1$. Combining with the above inequality gives us the desired WEFX property.
\end{proof}

We now circle back to prove Lemma~\ref{lem:part_wefx} and complete the analysis. 
\begin{proof}[Proof of Lemma~\ref{lem:part_wefx}]
    First, if $|P_j| = 1$ then the result is vacuously true. Thus, we assume that $|P_j| > 1$ which implies that there is some iteration in the run of Algorithm~\ref{alg:two_agent} where this bundle has a lower value to the second agent than $P_i$. Let $t$ be the last iteration in which the $j$-th bundle receives an item and use superscript notation to denote the bundle at this time (ie. $P_j^t \cup \{g^t\} = P_j$). This further implies that $P_i^t \ge P_j^t = P_j \setminus g$, and note that by the greedy nature of our algorithm, this $g$ is of minimal value in $P_j$ to the second agent. Therefore, at this iteration the result holds and since no further items are allocated to the $j$-th bundle, we have the result.
\end{proof}

\subsection{Indivisible Chores}
\subsubsection{Impossibility Results}

\begin{proof}[Proof of Theorem~\ref{thm:xwef_n2}] 
Consider the following instance on 4 chores:
\begin{center}
\begin{tabular}{ c | c c c c}
 	 & $b_1$ & $b_2$ & $b_3$ & $b_4$ \\ 
 	 \hline
	 $c_1$ & 0 & 0 & 1 & 1 \\  
 	 $c_2$ & 0 & 1 & $\phi$ & $\phi$\\  
\end{tabular}
\end{center}

We proceed by case analysis in the same manner as above. First, we consider the simplest case where all chores are allocated to one of the agents, denote the allocation $\B_2 = \{b_1,b_2,b_3,b_4\}$:
\begin{align*}
	 \frac{c_2(\B_2)}{1-\alpha} = \frac{2\phi + 1}{1-\alpha} > 0 = \frac{c_2(\varnothing)}{\alpha}
\end{align*}
Which is by definition an envious relationship. We can also see that it is not XWEF by examining the value of the bundle after removing the good of minimal value and once again checking for envy, as this would guarantee that the removal of some item alleviates the envy if possible:
\begin{align*}
	\max_{b \in \B_2} \frac{c_2(\B_2)}{1-\alpha} = \frac{2\phi + 1}{1-\alpha} > 0 = \frac{c_2(\varnothing)}{\alpha}
\end{align*}
It is also clear that by the additivity of the valuations, $\max_{a \in \B_i}\frac{c_i(\B_j \setminus \{b\})}{w_j} \leq \frac{c_i(\B_j)}{w_j}$. Thus we need only check this condition when verifying an allocation is not XWEF. 

Converse to the above allocation, consider the case where all the chores are allocated to agent 1, denoting the allocation $\B_1 = \{b_1,b_2,b_3,b_4\}$. We once again see this yields an envious relationship which is not XWEF:
\begin{gather*}
	 \frac{c_1(\B_1)}{\alpha} = \frac{2}{\alpha} < 0 = \max_{a \in \B_2}\frac{c_1(\B_2 \setminus \{b\})}{1-\alpha}
\end{gather*}
Now consider the case where one agent is allocated exactly one of the goods. The agent who receives the one chore will always be XWEF relative to the other, as removing the other agents single chore will yield a value of zero. Thus, we need only consider the agent who receives exactly three chores. We first evaluate the possibilities for agent 1:
\vspace{2.5mm}\\
\noindent (1) $\B_1 = \{b_1,b_2,b_3\} :$
$$\max_{b \in \B_1}\frac{c_1(\B_1 \setminus \{b\})}{\alpha} = \frac{1}{\alpha} > \frac{1}{1-\alpha} = \frac{c_1(\B_2)}{1-\alpha} \Rightarrow \frac{1-\alpha}{\alpha}\text{-XWEF}$$

\noindent (2) $\B_1 = \{b_1,b_3,b_4\}:$
$$\max_{b \in \B_1}\frac{c_1(\B_1 \setminus \{b\})}{\alpha} = \frac{2}{\alpha} > 0 = \frac{c_1(\B_2)}{1-\alpha}
		\Rightarrow 0\text{-XWEF}$$
We omit here the cases where agent 2 is allocated only $b_1$ or $b_3$ as these have equivalent valuations to the above.
\vspace{2.5mm} \\
\noindent (3) $\B_2 = \{b_1,b_2,b_3\} :$
$$\max_{b \in \B_2}\frac{c_2(\B_2 \setminus \{b\})}{1-\alpha} = \frac{1 + \phi}{1-\alpha} > \frac{\phi}{\alpha} = \frac{c_2(\B_1)}{\alpha}
		\Rightarrow \frac{\phi(1+\phi)}{\alpha(1-\alpha)}\text{-XWEF}$$

\noindent (4) $\B_2 = \{b_1,b_3,b_4\} :$
$$\max_{b \in \B_2}\frac{c_2(\B_2 \setminus \{b\})}{1-\alpha} = \frac{2\phi}{1-\alpha} > \frac{1}{\alpha} = \frac{c_2(\B_1)}{\alpha}
		\Rightarrow \frac{2\phi\alpha}{1-\alpha}\text{-XWEF}$$

\noindent (5) $\B_2 = \{b_2,b_3,b_4\} :$
$$\max_{b \in \B_2}\frac{c_2(\B_2 \setminus \{b\})}{1-\alpha} = \frac{1+2\phi}{1-\alpha} > 0 = \frac{c_2(\B_1)}{\alpha}
		\Rightarrow 0\text{-XWEF}$$
We omit here the cases where agent 1 is allocated only $b_3$ as this is equivalent to (3). Lastly, we consider the case where each agent is allocated exactly two goods.
\vspace{2.5mm} \\
\noindent (6) $\B_1 = \{b_2,b_4\} , \B_2 = \{b_1,b_3\} :$
$$\max_{b \in \B_1} \frac{c_1(\B_1 \setminus \{b\})}{\alpha} = \frac{1}{\alpha} > \frac{1}{1-\alpha} = \frac{c_1(\B_2)}{1-\alpha}
		\frac{1-\alpha}{\alpha}\text{-XWEF}$$

\noindent (7) $\B_1 = \{b_1,b_2\}, \B_2 = \{b_3,b_4\} :$
$$\max_{b \in \B_2} \frac{c_2(\B_2 \setminus \{b\})}{1-\alpha} = \frac{\phi}{1-\alpha} > \frac{1}{\alpha} = \frac{c_2(\B_1)}{\alpha}
		\Rightarrow \frac{\phi\alpha}{1-\alpha}\text{-XWEF}$$

\noindent (8) $\B_1 = \{b_3,b_4\}, \B_2 = \{b_1,b_2\} :$
$$\max_{b \in \B_1}\frac{c_1(\B_1 \setminus \{b\})}{\alpha} = \frac{1}{\alpha} > 0 = \frac{c_1(B_2)}{1-\alpha}
		\Rightarrow 0\text{-XWEF}$$
Finally, noting that $\B_1 = \{b_1,b_3\},\{b_2,b_3\}\text{, and }\{b_1,b_4\}$ all result in the same valuations and are thus omitted. We thus conclude that no XWEF allocation exists for the given scenario since all possible unique allocations to the two agents with associated valuation profiles have been examined with none meeting the defined criteria.  
\vspace{2.5mm} \\
\noindent It is also important to examine the minimal $k$-approximate XWEF allocation achieved in the above counterexample:
\begin{gather*}
	\min_{\alpha \in (0,1)} \text{ } \{\frac{1-\alpha}{\alpha},\frac{\phi\alpha}{1-\alpha},\frac{\phi(1+\phi))}{\alpha(1-\alpha)}\}
\end{gather*}
We finally obtain the maximal approximate factor by equating the three expressions and solving to get $\alpha = 0.44$. Therefore, our maximal approximate factor is 1.272-XWEF.
\end{proof}

\begin{proof}[Proof of Theorem~\ref{thm:xwef_n3}] For the given problem instance:
\begin{center}
\begin{tabular}{ c | c c c c c}
 	 & $b_1$ & $b_2$ & $b_3$ & $b_4$ & $b_5$ \\ 
 	 \hline
	 $c_1$ & $1$ & 1 & $\phi$ & $\phi$ & 1 \\  
 	 $c_2$ & $1$ & 0 & 0 & $1$ & 1 \\
 	 $c_3$ & $\phi$ & 1 & 0 & $\phi$ & $\phi$
\end{tabular}
\end{center}
with $w = (\frac{1}{15},\frac{6}{15},\frac{8}{15})$ we have that the allocation to the three agents with a smallest approximate factor of 1.214 is
$$\B_1 = \{b_1\}, \B_2 = \{b_5\}, \B_3 = \{b_2,b_3,b_4\}$$
\end{proof}

\subsubsection{1WEF Algorithm} \label{appendix:1wef}
\begin{proof}[Proof of Lemma~\ref{thm:1wef_lem1}]
    Since agent $i$ is picked at iteration $t$, it must be that $i \in \arg\min_{i \in \N} \frac{t_i}{w_i}$, and therefore $\frac{t_i}{w_i} \leq \frac{t_j}{w_j}\rightarrow\frac{t_j}{t_i} \geq \frac{w_j}{w_i}$ as $t_i >0$.
\end{proof}

\begin{proof}[Proof of Lemma~\ref{thm:1wef_lem2}]
Consider any iteration $t$ where agent $i$ has been selected to pick the next chore.  Let the number of times $i$ appears in the picking sequence before $j$'s first pick be denoted as $\tau_0$, and further denote the number of times $i$ picks between $j$'s $k$ and $(k+1)$-th pick by $\tau_k$.  For $r \in [t_j]$, we have that $\sum_{x=1}^{r}\tau_x$ and $r$ are the number of chores selected by $i$ before $j$ selects their $(r+1)$-st item and number of items $j$ has selected respectively, and thus Lemma~\ref{thm:1wef_lem1} gives us
\begin{gather*}
	\sum_{x=1}^{r}\tau_x \leq r\phi \text{ for all } r \in [t_j]
\end{gather*}
where we let $\phi := \frac{w_i}{w_j}$.

Additionally, we note that anytime agent $i$ was selected, they picked one of the items they valued least among the remaining chores, including items picked later by agent $j$. Thus, if we let $\beta_x$ denote the cost of the chore picked by agent $j$ in their $x$-th pick and $\alpha_y^x$ be the cost of a chore $i$ picks between $j$'s $(x-1)$ and $x$-th pick, then for all $y \in \tau_x$:
\begin{align*}
	\alpha_{y}^{x} &\leq \text{min} \{\beta_x,\beta_{x+1}, ..., \beta_{t_j}\}
\end{align*}
Subsequently, since $\tau_x \geq 0$ for any $x \in [t_j]$, then
\begin{align}
	\rightarrow \sum_{y=1}^{\tau_x} \alpha_y^x &\leq \tau_x \cdot \text{min} \{\beta_x,\beta_{x+1}, ..., \beta_{t_j}\} \label{ineq.alpha}
\end{align}
We claim that for each $r \in [t_j]$,
\begin{align*}
	\sum_{x=1}^r\sum_{y=1}^{\tau_x}\alpha_y^x \leq &\phi \cdot \left(\beta_{last} + \sum_{x=1}^r\beta_x\right) + \left(\sum_{x=1}^r \tau_x -r\phi\right) \cdot \text{min}\{\beta_r, ..., \beta_{t_j}\} 
\end{align*}
where $\beta_{last}$ is the cost associated with agent $j$'s chore allocated in the final loop of Algorithm~\ref{alg:1wef}. We proceed in proving the claim is true by induction. For the base case of $r = 1$, we utilize inequality (\ref{ineq.alpha}) to get
\begin{align*}
	\sum_{y=1}^{\tau_1} \alpha_y^1 &\leq \tau_1 \cdot \text{min} \{ \beta_1,\beta_{2}, ..., \beta_{t_j} \} \\
	&= \phi \cdot \text{min} \{ \beta_1,\beta_{2}, ..., \beta_{t_j}\} + (\tau_1 - \phi) \cdot \text{min} \{\beta_1,\beta_{2}, ..., \beta_{t_j} \} \\
	&\leq \phi \cdot (\beta_{last} + \beta_1) + (\tau_1 - \phi) \cdot \text{min} \{ \beta_1,\beta_{2}, ..., \beta_{t_j} \}
\end{align*}
where the second from the fact that $$\beta_{last} + \beta_1 \geq \text{min} \{ \beta_1,\beta_{2}, ..., \beta_{t_j}\}.$$ For the inductive step, we assume the claim holds for $r-1$ and we will prove it for $r$:
\begin{align*}
	\sum_{x=1}^r &\sum_{y=1}^{\tau_x}\alpha_y^x = \sum_{y=1}^{\tau_r}\alpha_y^r + \sum_{x=1}^{r-1}\sum_{y=1}^{\tau_x}\alpha_y^x \\
	\leq &\sum_{y=1}^{\tau_r}\alpha_y^r + \phi \cdot \left(\beta_{last} + \sum_{x=1}^{r-1}\beta_x\right) + \left(\sum_{x=1}^{r-1}\tau_x - (r-1)\phi\right) \cdot \text{min} \{\beta_{r-1},\beta_{r}, ..., \beta_{t_j} \} \\
	\leq &\text{ } \tau_r \cdot \text{min} \{\beta_r,\beta_{r+1}, ..., \beta_{t_j} \} + \phi \cdot \left(\beta_{last} + \sum_{x=1}^{r-1}\beta_x\right) + \left(\sum_{x=1}^{r-1}\tau_x - (r-1)\phi\right) \cdot \text{min} \{\beta_{r-1},\beta_{r}, ..., \beta_{t_j} \} \\
	\leq &\text{ } \phi \cdot \left(\beta_{last} + \sum_{x=1}^{r-1}\beta_x\right) + \left(\sum_{x=1}^{r}\tau_x - (r-1)\phi\right) \cdot \text{min}_{r \le \sigma \le t_j} \{\beta_\sigma \} \\
	\leq &\text{ } \phi \cdot \left(\beta_{last} + \sum_{x=1}^{r}\beta_x\right) + \left(\sum_{x=1}^{r}\tau_x - r\phi\right) \cdot \text{min}_{r \le \sigma \le t_j} \{\beta_\sigma \} \}
\end{align*}
Where the first inequality invokes the inductive hypothesis, the second uses (\ref{ineq.alpha}), and the final two steps come from combining and rearranging the terms within the summations as well as the fact that $\text{min} \{\beta_{r-1},\beta_{r}, ..., \beta_{t_j} \}$ is necessarily less than or equal to  $\text{min} \{\beta_r,\beta_{r+1}, ..., \beta_{t_j} \}$.
Lastly, we can see that if we take $r = t_j$, we can once again invoke \ref{ineq.alpha} to obtain
\begin{align*}
	\sum_{x=1}^{t_j}\sum_{y=1}^{\tau_x}\alpha_y^x \leq &\text{ } \phi \cdot \left(\beta_{last} + \sum_{x=1}^{t_j}\beta_x\right) \\
	&\quad + \left(\sum_{x=1}^{t_j} \tau_x - t_j\phi\right) \cdot \text{min}\{\beta_{t_j}, \beta_{t_j+1}, ..., \beta_{t_j}\} \\
	\leq &\text{ } \phi \cdot \left(\beta_{last} + \sum_{x=1}^{t_j}\beta_x\right)
\end{align*}
Observing that $\sum_{x=1}^{t_j}\sum_{y=1}^{\tau_x}\alpha_y^x$ and $\sum_{x=1}^{t_j}\beta_x$ are the partial allocations at iteration $t$ to agents $i$ and $j$ respectively, we can see that on each iteration of the weighted round-robin protocol, agent $i$ is 1WEF up to the final chore allocated in the second loop ($\beta_{last}$) to agent $j$.
\end{proof}

\end{document}